\newcommand{\1}{\mbox{1}\hspace{-0.25em}\mbox{l}}
\newcommand{\id}{\mathds{1}}
\renewcommand{\H}{\mathcal{H}}
\renewcommand{\L}{\mathcal{L}}
\newcommand{\SU}{\mathcal{SU}}
\newcommand{\<}{\langle}
\renewcommand{\>}{\rangle}
\newcommand{\doublewidetilde}[1]{{%
  \mathpalette\double@widetilde{#1}%
}}
\newcommand{\double@widetilde}[2]{%
  \sbox\z@{$\m@th#1\widetilde{#2}$}%
  \ht\z@=.9\ht\z@
  \widetilde{\box\z@}%
}
\newtheorem{definition}{Definition}
\newtheorem{lemma}{Lemma}
\begin{document}
\title{Multicopy quantum state teleportation with application to storage and retrieval of quantum programs}
\author{Frédéric Grosshans}
\orcid{0000-0001-8170-9668}
\affiliation{Sorbonne Universit\'{e}, CNRS, LIP6, F-75005 Paris, France}
\author{Michał Horodecki}
\orcid{0000-0002-0446-3059}
\affiliation{International Centre for Theory of Quantum Technologies, University of Gdańsk, Jana Bażyńskiego 1A, 80-309 Gdańsk, Poland}
\author{Mio Murao}
\orcid{0000-0001-7861-1774}
\affiliation{Department of Physics, Graduate School of Science, The University of Tokyo, 7-3-1 Hongo, Bunkyo-ku, Tokyo 113-0033, Japan}
\affiliation{Trans-scale Quantum Science Institute, The University of Tokyo, Bunkyo-ku, Tokyo 113-0033, Japan}
\author{Tomasz Młynik}
\orcid{0000-0003-4182-907X}
\affiliation{Institute of Theoretical Physics and Astrophysics, Faculty of Mathematics, Physics and Informatics, University of Gda\'{n}sk, Wita Stwosza 57, 80-308 Gda\'{n}sk, Poland}
\author{Marco Túlio Quintino}
\orcid{0000-0003-1332-3477}
\affiliation{Sorbonne Universit\'{e}, CNRS, LIP6, F-75005 Paris, France}
\author{Michał Studziński}
\orcid{0000-0002-5946-9845}
\affiliation{International Centre for Theory of Quantum Technologies, University of Gdańsk, Jana Bażyńskiego 1A, 80-309 Gdańsk, Poland}
\affiliation{Institute of Theoretical Physics and Astrophysics, Faculty of Mathematics, Physics and Informatics, University of Gda\'{n}sk, Wita Stwosza 57, 80-308 Gda\'{n}sk, Poland}
\author{Satoshi Yoshida}
\orcid{0000-0002-0521-5209}
\affiliation{Department of Physics, Graduate School of Science, The University of Tokyo, 7-3-1 Hongo, Bunkyo-ku, Tokyo 113-0033, Japan}
\date{27th April 2026}
\begin{abstract}
This work considers a teleportation task for Alice and Bob in a scenario where Bob cannot perform corrections.  In particular, we analyse the task of \textit{multicopy state teleportation}, where Alice has $k$ identical copies of an arbitrary unknown $d$-dimensional qudit state $\ket{\psi}$ to teleport a single copy of $\ket{\psi}$ to Bob using a maximally entangled two-qudit state shared between Alice and Bob without Bob's correction. Alice may perform a joint measurement on her half of the entangled state and the $k$ copies of $\ket{\psi}$.
We prove that the maximal probability of success for teleporting the exact state $\ket{\psi}$ to Bob is $p(d,k)=\frac{k}{d(k-1+d)}$ and present an explicit protocol to attain this performance.
Then, by utilising $k$ copies of an arbitrary target state $\ket{\psi}$, we show how the multicopy state teleportation protocol can be employed to enhance the success probability of storage and retrieval of quantum programs, which aims to universally retrieve the action of an arbitrary quantum channel that is stored in a state.
Our proofs make use of group representation theory methods, which may find applications beyond the problems addressed in this work.
\end{abstract}

\maketitle


\tableofcontents

\section{Introduction}

Quantum state teleportation is a protocol which allows two parties who share entanglement,
Alice and Bob, to transmit a quantum state without having direct access to a quantum channel~\cite{bennett93}. If Alice and Bob share a maximally entangled state $\ket{\phi_d^+}:=\frac{1}{\sqrt{d}}\sum_{i=0}^{d-1} \ket{ii} \in \mathbb{C}^d \otimes \mathbb{C}^d$ and Alice wants to send an arbitrary unknown qudit state $\ket{\psi}\in\mathbb{C}^d$, she performs a joint measurement on her side, and send the outcome of this measurement to Bob by classical communication. Bob may then perform a correction step, which depends on the outcome of Alice's measurement. In this way, Bob is ensured to hold the state $\ket{\psi}$ by consuming only entanglement and classical communication.
Today, quantum teleportation is a fundamental element ubiquitous in quantum information, being used as a building block of several protocols and tasks~\cite{Briegel1998Repeaters,Gottesman1999Demonstrating,Bennett1996Mixed,nielsen1997programmable,Chitambar2023Duality,murao99,Jozsa05GateTeleportation,Gottesman99GateTeleportation,NielsenChuangBook,wildeShannon}.

Despite its wide variety of applications, the necessary correction step in the standard teleportation protocol has some undesirable consequences. For instance, it forbids us from using the standard teleportation protocol to store a quantum program, i.e.\@, a unitary operation, and later retrieve it to apply on an arbitrary state $\ket{\psi}$~\cite{nielsen1997programmable} since the correction step depends on the quantum program to be stored. To overcome the correction step, Ishizaka and Hiroshima~\cite{ishizaka08,ishizaka09} have proposed a protocol referred to as \textit{Port-Based-Teleportation}  (PBT), where Alice and Bob share $N$ copies of maximally entangled qudit states, and apart from discarding a part of his qudits, Bob does not need to perform any correction. Since its first appearance, PBT has been an active topic of research, and optimal protocols were obtained when Alice wants to teleport arbitrary qudits via an arbitrary number of ports~~\cite{studzinski16}, in a scenario where Alice and Bob may share states which are not maximally entangled~\cite{mozrzymas18}, and also recycling~\cite{strelchuk13}, and a multiport scenario~\cite{studzinski20}. 
Since no corrections on Bob's side are required, PBT allows us to bypass the no-programming theorem~\cite{nielsen1997programmable}, and to perform a probabilistic or deterministic non-exact protocols for programming quantum operations into quantum states, a problem closely related (or even equivalent to some extent) to unitary learning~\cite{bisio10Learning} and storage and retrieval~\cite{sedlak18SAR}. Additionally, the possibility of performing PBT teleportation has found applications in seemly unrelated problems such as unitary estimation~\cite{Yoshida2024one}, unitary inversion~\cite{quintino19PRL,Quintino2021Deterministic}, unitary transposition~\cite{quintino19PRA}, Bell nonlocality~\cite{buhrman16BellPBT}, and nonlocal computation~\cite{Beigi2011Simplified}.

We can view the PBT protocol as a teleportation protocol which, at the cost of consuming entanglement, performs high-quality teleportation without any corrections on Bob's side (apart from discarding parts of his system). A natural question is which other resources or scenarios allow a high-quality correction-free teleportation? Can we improve our performance without consuming extra entanglement? For instance, how can we do a high-quality correction-free teleportation in a scenario where Alice wants to teleport a $d$-dimensional system, and Alice and Bob have a single copy of a $d$-dimensional maximally entangled state, the same amount of entanglement of the standard protocol. In some scenarios, a natural extra resource to consider is the case where Alice has access to \(k\) identical copies of an unknown input state. This scenario arises in a case where Alice has an uncharacterised source that can be used to produce IID (independent and identically distributed) quantum states. Situation encountered in various cryptographic scenarios~\cite{Pirandola2020crypto}.  We note that these copies constitute a nontrivial resource: although they do not reveal the state perfectly for finite \(k\), they can be processed jointly to increase the probability of successful teleportation without requiring any correction on Bob's side. And, even when Alice has access to infinitely many copies (hence, allowing to perform state tomography and having perfect knowledge of her state), as we detail later, the problem remains nontrivial, in the sense that deterministic and exact teleportation without a correction step is not possible.

In this work, we consider a scenario of multicopy state teleportation, where Alice and Bob share a single maximally entangled qudit state, and Alice has access to $k$ identical copies of  of an arbitrary qudit state $\ket{\psi}$ which she desires to teleport a single copy of $\ket{\psi}$ to Bob. We show that, by consuming $k$ copies of $\ket{\psi}$, Alice may teleport $\ket{\psi}$ to Bob with a success probability increasing with the number of copies $k$ in a scenario where Bob does not need to perform any correction. The scenario we consider may be viewed as a probabilistic PBT where Alice and Bob have a single port, but Alice has $k$ copies of the state she desires to teleport. We prove that, when $k$ copies of a $d$-dimensional state are available, the optimal success probability to teleport $\ket{\psi}$ to Bob is $p(d,k)=\frac{k}{d(k-1+d)}$ and present an explicit protocol to attain this performance. As an immediate application, we analyse how multicopy state teleportation may be used to increase the success probability of storage and retrieval of quantum programs (arbitrary quantum channels) when $k$ copies of the desired input state are available.

\section{The multicopy state teleportation task and the main result}
\label{Section2}
\subsection{Motivation and conceptual presentation of the problem }
\begin{figure}[h!]
    \centering
     \includegraphics[width=0.55\textwidth]{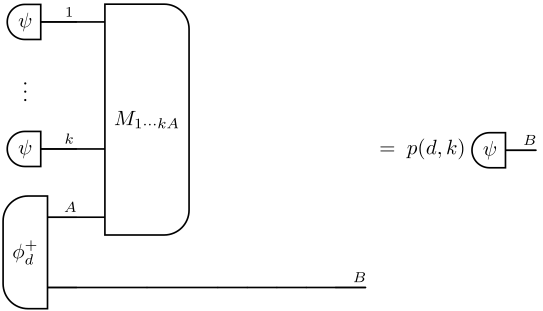}
    \caption{Pictorial illustration of the teleportation scenario considered in this work. Alice and Bob share one pair of a maximally entangled qudit state $\ket{\phi_d^+}$ and Alice has $k$ copies of an arbitrary qudit $\ket{\psi}$. In order to teleport the state $\ket{\psi}$ to Bob, Alice performs a joint measurement $M$ on all quantum states on her side. We consider the case where Bob probabilistically receives a single copy of $\ket{\psi}$ without any corrections, hence Alice's measurement outcome may be assumed to be dichotomic, where one measurement outcome corresponds to success, where Bob receives the exact state $\ket{\psi}$ with probability $p(d,k)$, and a failure case where Bob receives a quantum state which is not $\ket{\psi}$.}
    \label{fig:PBT_many_copies}
\end{figure}

In a standard teleportation protocol, Alice has a pure qudit state $\ket{\psi}\in\mathbb{C}^d$ which she wants to send to Bob in a scenario where the parties do not have access to a quantum channel, but they share one copy of a $d$-dimensional maximally entangled states $\ket{\phi_d^+}:=\frac{1}{\sqrt{d}}\sum_{i=0}^{d-1} \ket{ii} \in \mathbb{C}^d \otimes \mathbb{C}^d$. In this protocol, Alice performs a joint measurement  on her side, and send the outcome of this measurement to Bob by classical communication. Bob may then perform a correction step, which depends on the outcome of Alice's measurement. As discussed in the introduction, in addition to requiring more work, the correction step required by Bob prevents the standard teleportation scheme to be employed for other relevant informational and computational tasks, such as gate programming~\cite{nielsen1997programmable}, unitary storage-and-retrieval~\cite{bisio10Learning,sedlak18SAR}. One way to obtain a high-quality teleportation scheme without any correction is to perform PBT, a protocol which its performance depend on the amount of entanglement shared by the parties. But, when the parties share a single copy of a $d$-dimensional maximally entangled state, PBT reduces to standard teleportation without correction, and it has a success probability of $1/d^2$.

While from a foundations perspective it is common to analyse a situation where one has access to a single copy of a completely unknown quantum state $\ket{\psi}$, in various information and quantum information protocols, one has access to a source which produces multiple copies of a desired state. This is the situation on tasks such as cryptography~\cite{Pirandola2020crypto}, data compression of known states~\cite{Schumacher1995coding} and unknown states~\cite{Yang2016compression}, entanglement distillation~\cite{horodecki_review}, among others. It is then natural to consider the case that Alice does not have classical knowledge of the quantum state $\ket{\psi}$ she holds, but, she has many copies of it. This situation arises for instance, in a scenario where Alice has access to an uncharacterised source which can produce copies of the same quantum state. Alice's goal is then to send this state to Bob in a scenario where the parties do not have a quantum channel, but Alice and Bob share a single copy of a maximally entangled state.  Since our teleportation protocol does not require any correction step, it can be directly used for storing-and-retrieving quantum programs in a scenario where one has access to multiple copies of an input state. This application is detailed in Sec.~\ref{sec:application}.

The multicopy teleportation task is presented in Fig.~\ref{fig:PBT_many_copies}, in a scenario where two parties, Alice and Bob, share a $d$-dimensional qudit maximally entangled state $\ket{\phi^+_d}:=\frac{1}{\sqrt{d}}\sum_i \ket{ii} \in \mathbb{C}^d \otimes \mathbb{C}^d$ and Alice has $k$ copies of an arbitrary qudit state $\ket{\psi}\in\mathbb{C}^d$. Alice's goal is to perform a measurement on her part of the state and prepare the state $\ket{\psi}$ at Bob in a probabilistic heralded manner, that is, with probability $p(d,k)$, Alice knows that the state $\ket{\psi}$ was teleported to Bob perfectly, and with probability $1-p(d,k)$, Alice knows the state $\ket{\psi}$ was not perfectly teleported. 
In a standard quantum state teleportation~\cite{bennett93} scenario, it is always possible for Alice to inform Bob the outcome of her measurement in a way that Bob may perform a correction step. After the appropriate correction, Bob can recover the state $\ket{\psi}$ with $p=1$.
We consider a scenario where Bob cannot do any correction and Alice can only communicate to Bob whether the protocol has worked or failed. 
This goes in a similar direction of \textit{Port-Based-Teleportation} (PBT)~\cite{ishizaka08,ishizaka09,studzinski16,strelchuk13}, where Bob does not perform any correction operation, except for discarding some of his qubits. 

When $k=1$, Alice has a single copy of the arbitrary qudit state $\ket{\psi}$, and the task we consider corresponds to the standard quantum teleportation where Bob cannot perform corrections, which is equivalent to probabilistic PBT with a single port. In this case, the optimal measurement for Alice is any measurement where one of the measurement elements is the projector onto a maximally entangled state $\ketbra{\phi^+_d}$. 
The simplicity of this measurement, together with the lack of correction, makes this correction-less teleportation the most commonly implemented experimentally \cite{boschi98XPteleport}. 
Following the same calculation of standard state teleportation, we see that with probability $p=1/d^2$ the outcome corresponding to $\ketbra{\phi^+_d}$ is obtained, and Bob's state is transformed into $\ketbra{\psi}$. This problem becomes non-trivial when $k>1$, where Alice has then more options for joint measurements to perform on her side. 

Finally, we mention that, by design, independently of the dimension $d$ and number of copies $k$, in the probabilistic multi-copy state teleportation protocol, Alice communicates a single bit to Bob. She just informs Bob if she obtained the successful outcome, hence the protocol worked perfectly, or, if the protocol failed. This is in contrast with the standard teleportation protocol, where Alice must send one out of $d^2$ symbols to Bob (that is, $2$ classical dits). This shows that, one may have non-trivial probabilistic exact teleportation protocols even when Alice has very restricted classical communication to Bob, and motivates the question of quantum teleportation with limited classical communication.

\subsection{Precise mathematical description of the problem}
Consider a scenario where Alice and Bob share a $d$-dimensional qudit maximally entangled state $\ket{\phi^+_d}\in\H_A\otimes\H_B$, where $\H_A\cong \H_B\cong\mathbb{C}^d$ and Alice has $k$ copies of an arbitrary qudit state $\ket{\psi}\in\mathbb{C}^d$, states defined in the linear spaces $\H_i\cong\mathbb{C}^d$, where the index $i$ ranges from $1$ to $k$. The initial state held by Alice and Bob can be described by 
\begin{align}
\ket{\psi}_1\otimes \ket{\psi}_2\otimes\ldots\otimes \ket{\psi}_k\otimes \ket{\phi^+_d}_{AB}\in\H_1\otimes\H_2\otimes\ldots\otimes\H_k\otimes\H_A\otimes\H_B,
\end{align}
where the first $k+1$ subsystems are held by Alice and the last system is held by Bob, see Fig.~\ref{fig:PBT_many_copies}.
Alice then performs a two-outcome quantum measurement, described by the Positive Operator Valued Measure (POVM) \cite{NielsenChuangBook} $\{M,\id-M\}$, where  $M$ is a measurement operator, that is, a linear operator respecting
\begin{align}
 M &\in\L(\H_1\otimes\ldots\otimes\H_k \otimes \H_A) \text{, where $\L(\H)$ is the set of linear operators $\H \rightarrow\H$}\\
 M &\geq 0 \\
 M &\leq \id \text{, because $\id-M\geq0$.}
\end{align}

 In the multicopy quantum state teleportation task, the goal is that, when Alice's measurement succeeds, Bob should hold the state $\ket{\psi}$ on his space $\H_B$ with a success probability of $p(d,k)$, that is,
\begin{align} \label{eq:constraint}
	\tr_{1\ldots kA}\Big(\ketbra{\psi}{\psi}^{\otimes k}_{12\ldots k}\otimes \ketbra{\phi^+_d}{\phi^+_d}_{AB} \; M_{1\ldots kA}\otimes \id_{B}\Big)= p(d,k) \ketbra{\psi}{\psi}_B.
\end{align}
As previously stated, $M\in\mathcal{L}\left((\mathbb{C}^d)^{\otimes (k+1)}\right)$ is a measurement operator, that is, a linear operator respecting  $0 \leq M\leq \id_d^{{\otimes (k+1)}}$, in a way that the set $\{M,\id-M\}$ is a valid POVM, since $M\geq0$, $\id-M\geq0$ and $M+\id-M=\id$. 
We observe that, in the multicopy state teleportation task considered here, the protocol is allowed to fail with probability $1-p(d,k)$, but, when successful, we require the output state to be the \textit{exact same} state as the input state. The exact requirement imposes severe constraints on the measurements performed by Alice. For instance, most of the possible measurements, including the case where $M$ is the projector onto the symmetric space, would not be able to satisfy Eq.~\eqref{eq:constraint} for every state $\ket{\psi}$. 

In this work, we will present the measurement operator $M_{1\ldots kA}\in \L(\H_1\otimes\ldots\otimes\H_k \otimes \H_A)$ which maximises the success probability $p(d,k)$ of Bob obtaining $\ket{\psi}$ perfectly. In more precise terms, we solve the following optimisation problem,
\begin{align}
    \text{Given } d,k \in \mathbb{N}, \nonumber \\ 
    &\max_{\mathllap{M\in\mathcal{L}}\mathrlap{\cramped{\left({\mathbb{C}^d}\right)^{\otimes(k+1)\phantom{^2}}}}} \; p(d,k)\in[0,1] \label{eq:mainSDP}\\ 
    \text{such that: } & \forall \ket{\psi}\in\mathbb{C}^d \text{ with } \norm{\ket{\psi}}=1  \\
    &\tr_{1\ldots kA}\Big(\ketbra{\psi}{\psi}^{\otimes k}_{12\ldots k}\otimes \ketbra{\phi^+_d}{\phi^+_d}_{AB} \; M_{1\ldots kA}\otimes \id_{B}\Big)= p(d,k) \ketbra{\psi}{\psi}_B, \label{eq:notSDP} \\
    &  0 \leq M\leq \id .
\end{align}
In Sec.~\ref{sec:Proofs}, we show how to rewrite the above problem as a Semidefinite Program (SDP). In this current version, 
it is not immediately a SDP, since Eq.~\eqref{eq:notSDP} has to hold for all normalised vectors $\ket{\psi}$, the problem has infinitely many constraints, and it is not an SDP. 
However, due to linearity, we can reduce these infinitely many constraints to a finite number of them, and by making use of the symmetries in this problem, 
we solve the optimisation problem and obtain our main result:  the optimal measurement $M$ and its probability $p(d,k)$ for every $d,k \in \mathbb{N}$.
\begin{restatable}[Main Result]{theorem}{MainThm}
\label{thm:1}
The maximal success probability in the multicopy state teleportation problem described in Eq.~\eqref{eq:mainSDP} with $k$ copies of the arbitrary qudit state $\ket{\psi}\in\mathbb{C}^d$ is given by
\begin{align}
\label{eq:achev}
	p(d,k)=\frac{k}{d(k-1+d)}.
\end{align}
The maximal success probability is attainable by the POVM element
\begin{align}
\label{eq:achev2}
    M_{1\ldots kA}= \frac{dk}{(k-1+d)}\Big(P^{sym}_{1\ldots k}\otimes \id_A\Big)\Big(\id_{1\ldots(k-1)}\otimes \ketbra{\phi_d^+}_{kA} \Big)\Big(P^{sym}_{1\ldots k}\otimes \id_A\Big),
\end{align}
where $P^{sym}_{1\ldots k}$ is the projector onto the symmetric subspace%
\footnote{A vector $\ket{\psi}\in\left(\mathbb{C}^d\right)^{\otimes k}$, belongs to the symmetric subspace of $\left(\mathbb{C}^d\right)^{\otimes k}$ if it is party permutation invariant, that is, for any permutation $\pi\in S_k$ of a set with $k$ elements, we have $\ket{\psi}_{12\ldots k}=\ket{\psi}_{\pi(12\ldots k)}$~\cite{WatrousBook,Harrow2013Church}.}%
of $\left(\mathbb{C}^d\right)^{\otimes k}$, and acts on systems $1,\ldots,k$, while the identity operator $\id_{1\ldots (k-1)}$ acts on systems $1,\ldots, k-1$.
\end{restatable}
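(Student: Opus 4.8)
The plan is to eliminate Bob's system and the entangled resource analytically, turning the teleportation condition into a single operator identity on $\H_A$, and then to exploit the $U(d)$-covariance of that identity to collapse the optimisation to a two-parameter problem. First I would apply the ``ricochet'' identity for $\ketbra{\phi_d^+}_{AB}=\frac1d\sum_{ij}\ket{ii}\bra{jj}$ to trace out $A$ and $B$. Writing $\ket{\psi^*}$ for the entrywise complex conjugate of $\ket{\psi}$, a direct computation shows that Bob's (unnormalised) output equals $\frac1d$ times the transpose of $\bra{\psi}^{\otimes k}_{1\dots k}M_{1\dots kA}\ket{\psi}^{\otimes k}_{1\dots k}$, so the success condition~\eqref{eq:notSDP} is equivalent to
\[
\bra{\psi}^{\otimes k}_{1\dots k}\,M_{1\dots kA}\,\ket{\psi}^{\otimes k}_{1\dots k}=d\,p(d,k)\,\ketbra{\psi^*}_A \quad\text{for all unit }\ket{\psi}.
\]
Splitting this into the part proportional to $\ketbra{\psi^*}$ and the part orthogonal to $\ket{\psi^*}$ records that the surviving weight is $d\,p$ and that nothing leaks into the orthogonal complement. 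Since both sides are polynomial in $(\psi,\bar\psi)$, polarisation replaces the continuum of constraints by finitely many linear ones, making~\eqref{eq:mainSDP} a genuine SDP.

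Next I would perform two symmetry reductions. Because $P^{sym}_{1\dots k}\ket{\psi}^{\otimes k}=\ket{\psi}^{\otimes k}$, replacing $M$ by $(P^{sym}_{1\dots k}\otimes\id_A)M(P^{sym}_{1\dots k}\otimes\id_A)$ leaves the constraint unchanged and preserves $0\le M\le\id$ (since $P^{sym}MP^{sym}\le P^{sym}\le\id$); hence without loss of generality $M$ is supported on $\mathrm{Sym}^k(\mathbb{C}^d)\otimes\H_A$. I would then check, using $\ket{(U\psi)^*}=\overline{U}\ket{\psi^*}$, that the feasible set and the objective are invariant under $M\mapsto(U^{\otimes k}\otimes\overline{U})^\dagger M(U^{\otimes k}\otimes\overline{U})$ for every $U\in U(d)$. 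Twirling over the Haar measure then lets me assume $M$ commutes with $\mathrm{Sym}^k(U)\otimes\overline{U}$.

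The crux is the representation theory. I would invoke the Pieri-type decomposition $\mathrm{Sym}^k(\mathbb{C}^d)\otimes\overline{\mathbb{C}^d}\cong\mathrm{Sym}^{k-1}(\mathbb{C}^d)\oplus W$ into two inequivalent irreducibles of highest weights $(k-1,0,\dots,0)$ and $(k,0,\dots,0,-1)$, each of multiplicity one for $d\ge2$. By Schur's lemma the commutant is therefore two-dimensional and abelian, so $M=a\,\Pi_1+b\,\Pi_2$ with $\Pi_1+\Pi_2=P^{sym}_{1\dots k}\otimes\id_A$ and $0\le a,b\le1$, where $\Pi_1$ projects onto the $\mathrm{Sym}^{k-1}$ component. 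I would then identify $\Pi_1$ with the explicit operator $N:=(P^{sym}_{1\dots k}\otimes\id_A)(\id_{1\dots(k-1)}\otimes\ketbra{\phi_d^+}_{kA})(P^{sym}_{1\dots k}\otimes\id_A)$ suitably normalised. Establishing that the commutant is exactly two-dimensional (the multiplicity-free two-term decomposition, with its small-$d$ edge cases) is the step I expect to be the main obstacle, since everything downstream is forced by it.

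Finally I would evaluate and optimise. Each $\bra{\psi}^{\otimes k}\Pi_i\ket{\psi}^{\otimes k}$ must take the covariant form $\alpha_i\ketbra{\psi^*}+\beta_i(\id-\ketbra{\psi^*})$, and the key input is the partial-trace identity $\tr_k P^{sym}_{1\dots k}=\frac{d+k-1}{k}P^{sym}_{1\dots(k-1)}$. Combined with $\bra{\psi}_k\ketbra{\phi_d^+}_{kA}\ket{\psi}_k=\frac1d\ketbra{\psi^*}_A$, this gives $\bra{\psi}^{\otimes k}N\ket{\psi}^{\otimes k}=\frac1d\ketbra{\psi^*}_A$, so $\beta_1=0$ and $\alpha_1=\frac{k}{d+k-1}$, while the complementary relation $\bra{\psi}^{\otimes k}(P^{sym}_{1\dots k}\otimes\id_A)\ket{\psi}^{\otimes k}=\id_A$ yields $\beta_2=1>0$. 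The vanishing-leakage constraint then forces $b=0$, and the objective $d\,p=a\,\alpha_1$ is maximised at $a=1$, giving $p(d,k)=\frac{k}{d(d+k-1)}$ with $M=\Pi_1$. For achievability I would verify directly, again from the partial-trace identity, that the $M$ in~\eqref{eq:achev2} has operator norm one on the symmetric subspace (hence $0\le M\le\id$) and satisfies the operator identity above, closing both the bound and its attainment.
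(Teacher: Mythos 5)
Your proposal is correct and reaches the same optimal measurement and probability as the paper, but by a genuinely different route. The paper never reduces the problem to an operator identity on $\H_A$: it splits the exact-teleportation condition \eqref{eq:notSDP} into two Haar-averaged trace equalities, recasts these as the SDP \eqref{eq:opt1}--\eqref{eq:opt4}, and then works inside the algebra $\mathcal{A}_{k+1}^{t_A}(d)$ of partially transposed permutation operators, importing from the port-based-teleportation literature the ideal decomposition $\mathcal{M}\oplus\mathcal{S}$, the projectors $F_\mu(\alpha)$, and (via its Lemma~\ref{Lemma2}) the expansions of $P^{sym}_{1\ldots k}\otimes\id_A$ and $(P^{sym}_{1\ldots k,A})^{t_A}$ in that algebra, which collapse the SDP to two coefficients fixed by comparing \eqref{eqn:pright} with \eqref{eqn:pleft}. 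You instead eliminate Bob at the outset with the ricochet identity, compress $M$ onto $\mathrm{Sym}^k(\mathbb{C}^d)\otimes\H_A$, twirl, and invoke the multiplicity-free Pieri decomposition $\mathrm{Sym}^k(\mathbb{C}^d)\otimes\overline{\mathbb{C}^d}\cong\mathrm{Sym}^{k-1}(\mathbb{C}^d)\oplus W$ plus Schur's lemma to obtain the same two-parameter family directly: your $\Pi_1$ is the paper's $F_{sym_k}(sym_{k-1})$ (both of rank $\binom{k-2+d}{k-1}$), your $\Pi_2$ is $P^{sym_k}_{\mathcal{S}}$ restricted to the symmetric support, and your no-leakage condition $a\beta_1+b\beta_2=0$ with $\beta_1=0$, $\beta_2=1$ plays exactly the role of the paper's coefficient comparison forcing $a_{sym_k}=0$. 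The step you flag as the main obstacle — two-dimensionality of the commutant — is indeed the crux, but it is standard: the two summands have highest weights $(k-1,0,\ldots,0)$ and $(k,0,\ldots,0,-1)$ and are inequivalent for all $d\geq 2$ (the case $d=1$ being trivial); and identifying $\Pi_1$ with the normalised $N$ follows from the equivariant embedding $\ket{v}\mapsto\big(P^{sym}_{1\ldots k}\otimes\id_A\big)\big(\ket{v}\otimes\ket{\phi_d^+}_{kA}\big)$ together with $N^2=\tfrac{d+k-1}{dk}N$, which your partial-trace identity $\tr_k P^{sym}_{1\ldots k}=\tfrac{d+k-1}{k}P^{sym}_{1\ldots(k-1)}$ supplies — this also settles $0\leq M\leq\id$ and implicitly recovers the eigenbasis $\{\ket{r_i}\}$ of the paper's Lemma~\ref{lemma:eigendecomposition}. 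What each approach buys: yours is more elementary and self-contained, needing only $U(d)$ representation theory on the symmetric subspace and no quoted structure theorems about $\mathcal{A}_{k+1}^{t_A}(d)$; the paper's heavier machinery is precisely what one would want for the $N$-port generalisation it proposes, where the port permutation symmetry can no longer be absorbed into a single symmetric subspace and the full ideal structure of the partially transposed permutation algebra becomes essential.
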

The proof of Thm.~\ref{thm:1} is presented in Sec.~\ref{sec:Proofs}. For now, we notice that, as it should be, when a single copy of the arbitrary {state} $\ket{\psi}\in\mathbb{C}^d$ is available, we obtain $p_s(d,k=1)=\frac{1}{d^2}$, which is the optimal success probability in the standard teleportation scenario without correction. Also, when $k$ is very large, we obtain $p(d,k\to\infty)=\frac{1}{d}$, which is the performance of Remote State Preparation (RSP)~\cite{Bennett01RSP}. One interpretation of the RSP is that Alice performs a teleportation protocol with a known state $\ket{\psi}$, in contrast with the standard quantum teleportation protocol where $\ket{\psi}$ is an unknown arbitrary state. From this perspective, we may understand the formula  $p(d,k\to\infty)=\frac{1}{d}$ as, when Alice has infinitely many copies, she may first perform quantum state tomography, and then implement the RSP protocol to obtain a success probability of $\frac{1}{d}$.

In the Appendix~\ref{appendix}, we discuss the relationship between multicopy state teleportation and the concept of probabilistic simulation of quantum channel from the future to the past proposed at Ref.~\cite{genkina2012optimal}. More precisely, we stablish a lemma that allows a strong connection between our work and Ref.~\cite{genkina2012optimal}, this lemma then leads to an an alternative proof of our Thm.~\ref{thm:1}.

\section{Application: storage and retrieval of quantum programs when $k$ copies of the input state are available} \label{sec:application}
\begin{figure}[h!]
    \centering
    \includegraphics[width=.75\linewidth]{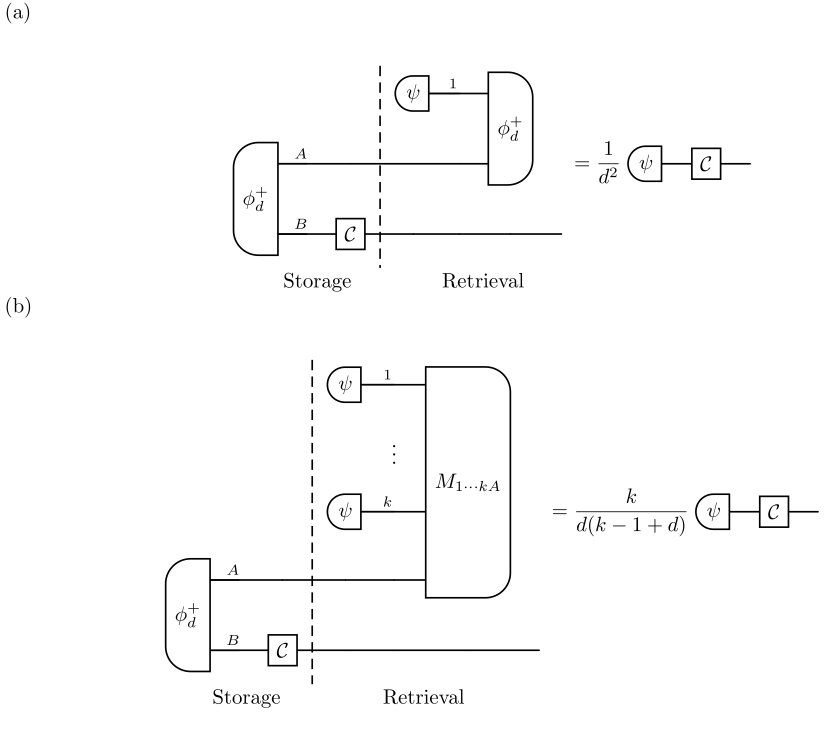}
\caption{  (a) The task of storage and retrieval of quantum programs, which can be used to universally retrieve the action of an arbitrary quantum channel $\mathcal{C} : \mathcal{L}(\mathbb{C}^d) \to \mathcal{L}(\mathbb{C}^{d'})$ on an arbitrary state $\ket{\psi} \in \mathbb{C}^d$ with probability $p(d, k - 1) = \frac{1}{d^2}$. \\
(b) Employing the multicopy state teleportation protocol to retrieve the action of an arbitrary operation $\mathcal{C} : \mathcal{L}(\mathbb{C}^d) \to \mathcal{L}(\mathbb{C}^{d'})$ when $k$ copies of an arbitrary state $\ket{\psi} \in \mathbb{C}^d$ are available, with probability $p(d, k) = \frac{k}{d(k - 1 + d)}$.}
    \label{fig:multiCopySAR}
\end{figure}
The standard state teleportation protocol may be used to teleport quantum \textit{operations}, which in this context are also referred to as a \textit{quantum programs}, in a scheme sometimes phrased as gate-teleportation~\cite{Jozsa05GateTeleportation,Gottesman99GateTeleportation}, a technique which may be used to perform quantum unitary programming~\cite{nielsen1997programmable} and probabilistic unitary storage and retrieval~\cite{sedlak18SAR}. 
The idea works as follows (see Fig.~\ref{fig:multiCopySAR}), let $U\in\L(\mathbb{C}^d)$ be an arbitrary (unknown) unitary operation, which may be implemented by a whole quantum circuit, referred to as a unitary quantum program $U$. We may then make a single use of this operation and apply it into a half of a maximally entangled qudit state $\ket{\phi_d^+}_{AB}\in\mathbb{C}^d\otimes\mathbb{C}^d$, to obtain the state $\id_A \otimes U_B \ket{\phi_d^+}_{AB}$, which will be stored until the user desires to retrieve the use of the operation $U$. 
When the user desires to obtain the operation $U$ on an arbitrary state $\ket{\psi}\in\mathbb{C}^d$, the user performs a measurement with a POVM element $M=\ketbra{\phi_d^+}_{1A}$ on the state $\ket{\psi}_1\otimes\ket{\phi_d^+}_{AB}$. It follows from the same calculation used in the teleportation trick of Eq.~\eqref{eq:TeleportationTrick}, that the state $U\ket{\psi}$ is obtained in the space $\H_B$ with probability $\frac{1}{d^2}$. 
Note that we may drop the assumption that the operation considered is unitary. An analogous argument shows that we can teleport the action of non-unitary quantum programs as well. The concept of a general deterministic program is formalised by a quantum channel~\cite{WatrousBook,wildeShannon}, which is a completely positive trace-preserving (CPTP) linear map $\mathcal{C} : \mathcal{L}(\mathbb{C}^d) \to \mathcal{L}(\mathbb{C}^{d'})$ that transforms qudit states in $\mathbb{C}^d$ into quantum states in $\mathbb{C}^{d'}$. The protocol for storage and retrieval of arbitrary quantum programs follows the same steps as in the unitary case. In the storage phase, we apply the arbitrary (unknown) channel $\mathcal{C}$ on one half of a maximally entangled qudit state $\ket{\phi_d^+} \in \mathbb{C}^d \otimes \mathbb{C}^d$, obtaining the state $\id\otimes \mathcal{C} \left( \ketbra{\phi_d^+} \right)$. Later, in the retrieval phase,  the same measurement with $M=\ketbra{\phi_d^+}_{1A}$ on the resulting state allows us to universally retrieve the action of $\mathcal{C}$ on an arbitrary state $\ket{\psi}$ with probability $\frac{1}{d^2}$ see Fig.~\ref{fig:multiCopySAR}).

We may now use multicopy state teleportation in the retrieval step described in the previous paragraph. The idea now is, if instead of having a single copy of the arbitrary qudit state $\ket{\psi}\in\mathbb{C}^d$, we have $k$ copies of it, that is, we hold $\ket{\psi}^{\otimes k}$. Instead of performing the retrieval operation $M_{1A}=\ketbra{\phi_d^+}$, we perform the operation $M_{1\ldots kA}:=\frac{dk}{(k-1+d)}\Big(P^{sym}_{1\ldots k}\otimes \id_A\Big)\left( \id_{1\ldots k}\otimes \ketbra{\phi_d^+}_{kA} \right)\Big(P^{sym}_{1\ldots k}\otimes \id_A\Big)$ on $\ket{\psi}^{\otimes k}_{1\ldots k}$ and Alice's part of the maximally entangled state $\ket{\phi_d^+}_{AB}$ to obtain the output state $U\ket{\psi}_B$ with probability $p(d,k)=\frac{k}{d(k-1+d)}$. 
Similarly, as in the case where a single copy of $\ket{\psi}$ is available, we may drop the assumption that the operation is unitary. More precisely, we can perform a single application of an arbitrary (unknown) channel $\mathcal{C} : \mathcal{L}(\mathbb{C}^d) \to \mathcal{L}(\mathbb{C}^{d'})$ on one half of a maximally entangled state to obtain the state $\id \otimes \mathcal{C} \left( \ketbra{\phi_d^+} \right)$. Then, if the measurement $M_{1 \ldots k A}$ is performed on the first $k + 1$ systems, as in Fig.~\ref{fig:multiCopySAR}, the state $\mathcal{C}(\ketbra{\psi})$ is teleported to Bob with probability $p(d, k) = \frac{k}{d(k - 1 + d)}$.

As discussed earlier, when $k\to\infty$, we obtain $p(d,k\to\infty)=\frac{1}{d}$, which corresponds to the case where we want to retrieve the operation $U$ stored in  $\id\otimes U \ket{\phi^+_d}$ in a ``known'' input state $\ket{\psi}$ instead of an arbitrary unknown one. As also discussed earlier, when one has access to infinitely many copies of the state, one can perform quantum state tomography and know exactly the state $\ket{\psi}$, hence the measurement $M$ may depend explicitly on $\ket{\psi}$. When $k\to\infty$, the task presented in this section is in a close relationship to remote state preparation~\cite{Bennett01RSP}, and may be viewed as a variation of the storage and retrieval task~\cite{sedlak18SAR} in a scenario where the state on which one desires to retrieve the operation is known. Also, when restricted to the case where $k\to\infty$, the task presented here is equivalent to the single-port version of the Port-Based State Preparations task presented in Ref.~\cite{Muguruza2024PBRSP}, and the problem of retrieving the usage of multiple calls of a quantum operation on known quantum states is also analysed in Ref.~\cite{Brizc2024PBRSP} in the context of higher-order quantum computing with known input states.

\section{Proof of attainability and the intuition behind the protocol}
We restate here Eq.~(\ref{eq:achev2}) from Thm.~\ref{thm:1}, 
which gives the POVM element describing the optimal measurement:
\begin{align}
    M_{1\ldots kA}=\frac{dk}{(k-1+d)}\Big(P^{sym}_{1\ldots k}\otimes \id_A\Big)\Big(\id_{1\ldots(k-1)}\otimes \ketbra{\phi_d^+}_{kA} \Big)\Big(P^{sym}_{1\ldots k}\otimes \id_A\Big).
    \tag{\ref{eq:achev2}}
\end{align}
Before presenting in Sec.~\ref{sec:Proofs} a detailed proof that this is indeed the optimal measurement,  we analyse some properties of this measurement to give an intuition on why this measurement is useful for multicopy state teleportation, and prove that the success probability $p(d,k)=\frac{k}{d(k-1+d)}$ is attainable.

First, notice that $M_{1\ldots kA}=\frac{dk}{(k-1+d)}\Big(P^{sym}_{1\ldots k}\otimes \id_A\Big)\Big(\id_{1\ldots(k-1)}\otimes \ketbra{\phi_d^+}_{kA} \Big)\Big(P^{sym}_{1\ldots k}\otimes \id_A\Big)$ generalises the measurement performed by Alice in the standard probabilistic teleportation protocol, i.e., when $k=1$, we obtain $M_1 = \ketbra{\phi_d^+}_{1A}$. Also, it is not hard to show that, if we set Alice's measurements to be $ M_{1\ldots kA}=\frac{dk}{(k-1+d)}\Big(P^{sym}_{1\ldots k}\otimes \id_A\Big)\Big(\id_{1\ldots(k-1)}\otimes \ketbra{\phi_d^+}_{kA} \Big)\Big(P^{sym}_{1\ldots k}\otimes \id_A\Big)$, the multicopy state teleportation protocol is successful with probability $p(d,k)=\frac{k}{d(k-1+d)}$. 
By successively using 
the cyclic property of the trace $\tr\big(ABC\big)=\tr\big(BCA\big)$,
the identity $P^{sym}_{1\ldots k}\ket{\psi}^{\otimes k}_{12\ldots k} = \ket{\psi}^{\otimes k}_{12\ldots k}$,
and the equality derived by the ``teleportation trick''~\cite{bennett93} 
\begin{align}    
\tr_{1A}\biggl(\Big(\ketbra{\phi^+_d}_{1A} \otimes \id_B \Big)\Big( \ketbra{\psi}_1 \otimes \ketbra{\phi^+_d}_{AB}\Big)\biggl)
=
\frac{1}{d^2} \ketbra{\psi}_B,
\end{align}
one may check that,
\small
\begin{align}
  \tr_{1\ldots kA}
    &\biggl(\ketbra{\psi}{\psi}^{\otimes k}_{12\ldots k}\otimes \ketbra{\phi^+_d}{\phi^+_d}_{AB} \; M_{1\ldots kA}\otimes \id_{B}\biggl) \\
    &\begin{multlined}[c][.87\linewidth]=\frac{dk}{(k-1+d)}  \tr_{1\ldots kA}\biggl(\ketbra{\psi}{\psi}^{\otimes k}_{12\ldots k}\otimes \ketbra{\phi^+_d}{\phi^+_d}_{AB} \;\Big(P^{sym}_{1\ldots k}\otimes \id_{A}\Big)
       \\\Big(\id_{1\ldots(k-1)}\otimes \ketbra{\phi_d^+}_{kA} \Big)
       \Big(P^{sym}_{1\ldots k}\otimes \id_{A}\Big) \otimes \id_{B}\biggl)\end{multlined}\\
    &\begin{multlined}[c][.87\linewidth]=\frac{dk}{(k-1+d)}  \tr_{1\ldots kA}\biggl(\Big(P^{sym}_{1\ldots k}\otimes \id_{AB}\Big) \ketbra{\psi}{\psi}^{\otimes k}_{12\ldots k}\otimes \ketbra{\phi^+_d}{\phi^+_d}_{AB} \;\Big(P^{sym}_{1\ldots k}\otimes \id_{AB}\Big)\\\Big(\id_{1\ldots(k-1)}\otimes \ketbra{\phi_d^+}_{kA} \otimes \id_B \Big)\biggl)\end{multlined}\\
    &=\frac{dk}{(k-1+d)}  \tr_{1\ldots kA}\biggl(\ketbra{\psi}{\psi}^{\otimes k}_{12\ldots k}\otimes \ketbra{\phi^+_d}{\phi^+_d}_{AB} \Big(\id_{1\ldots(k-1)}\otimes \ketbra{\phi_d^+}_{kA} \otimes \id_B \Big)\biggl)\\
    &=  \frac{dk}{(k-1+d)} \frac{1}{d^2} \ketbra{\psi}{\psi}_B \\
    &=  \frac{k}{d(k-1+d)} \ketbra{\psi}{\psi}_B.
\end{align}
\normalsize
Finally, since $M_{1\ldots kA}$ is the composition of positive semidefinite operators, we have that $M_{1\ldots kA}\geq0$. In order to finish the attainability proof, one must also show that $ M_{1\ldots kA}\leq \id$, of which detailed proof is shown in Section~\ref{sec:Proofs}. 

In the next subsections, we present an alternative approach to the problem which, in addition to providing an intuition on why the proposed measurements are optimal, it will lead into an eigendecomposition of the measurement operator  $M_{1\ldots kA}$ from Eq.~(\ref{eq:achev2}). 

\subsection{The $k=2$ case}
Before start analysing the $k=2$ case, let us quickly revisit standard state teleportation, i.e., the $k=1$ case.
When $k=1$, Alice's measurement operator is given by $M_{1A}=\ketbra{\phi_d^+}_{1A}$, Alice and Bob share the state $\ket{\psi}_1\otimes\ket{\phi_d^+}$, and the teleportation protocol works thanks to the ``teleportation trick'' calculation
\begin{align}\label{eq:TeleportationTrick}
    \Big(\bra{\phi_d^+}_{1A}\otimes\id_B\Big) \Big(\ket{\psi}_1\otimes\ket{\phi_d^+}_{AB}\Big) = \frac{1}{d}\ket{\psi}_B.
\end{align}
This calculation is shown by recalling that $\ket{\phi_d^+}:=\frac{1}{\sqrt{d}}\sum_i \ket{ii}$ and that an arbitrary state can always be decomposed in the computational basis as $\sum_i \braket{i}{\psi}\ket{i}$, and then we have
\begin{align}\label{eq:TeleportationTrick2}
    \Big(\bra{\phi_d^+}_{1A}\otimes\id_B\Big) \Big(\ket{\psi}_1\otimes\ket{\phi_d^+}_{AB}\Big) =&
    \frac{1}{d}\sum_{ikl} \Big(\bra{kk}_{1A}\otimes \id_B\Big)\Big(\braket{i}{\psi} \ket{i}_1\otimes \ket{ll}_{AB}\Big) \\
    =&
    \frac{1}{d}\sum_{ikl} \braket{k}{i}_{1}\braket{k}{l}_{A} \braket{i}{\psi} \ket{l}_B \\
    =&
    \frac{1}{d}\sum_{i} \braket{i}{\psi} \ket{i}_B \\
        =&
    \frac{1}{d} \ket{\psi}_B.
\end{align}
Rewritten in density matrix notation, the teleportation trick of Eq.~\eqref{eq:TeleportationTrick}
\begin{align}
    \tr_{1A}\biggl(\Big(\ketbra{\phi_d^+}_{1A}\otimes\id_B \Big)\Big(\ketbra{\psi}_1\otimes\ketbra{\phi_d^+}_{AB}\Big)\biggl) = \frac{1}{d^2}\ketbra{\psi}_B,
\end{align}
is obtained. It may also be shown by making use of the identity $A\otimes \id \ket{\phi_d^+}=\id \otimes A^T \ket{\phi_{d'}^+}$ which holds for any linear operator $A:\mathbb{C}^d\to \mathbb{C}^{d'}$, and $A^T$ represents the transposition of $A$ in the computational basis.

We may now try to imagine a way to generalise Alice's measurement for $k=2$, for that, we seek for states which is permutation invariant in the spaces $\H_{1}\otimes\H_{2}$, and has the teleportation property of $\phi_d^+$ in the space $A$. One idea to attain this goal is to define the states:
\begin{align}
    \ket{r_i}:=\frac{\ket{i}_1\otimes \ket{\phi_d^+}_{2A} +  \ket{i}_2\otimes \ket{\phi_d^+}_{1A}}{\gamma}, \quad i\in\{1,\ldots, d\}
\end{align}
where $\gamma\in\mathbb{R}$ is a normalisation factor to ensure $\norm{\ket{r_i}}=1$. Notice that, for any state $\ket{\psi}\in\mathbb{C}^d$, by using the teleportation trick from Eq.~\eqref{eq:TeleportationTrick}, we have\footnote{When it is apparent} from the context, we may suppress the tensor product symbol $\otimes$, that is we may use the notation given by $\ket{\psi}_1\ket{\psi}_2:=\ket{\psi}_1\otimes\ket{\phi}_2$.
\begin{align}
    \Big(\bra{r_i}_{12A}\otimes \id_B\Big) \Big( \ket{\psi}_1\ket{\psi}_2\ket{\phi_d^+}_{AB}\Big)&=
\frac{1}{\gamma}\Big(\bra{i}_1 \bra{\phi_d^+}_{2A}\otimes \id_B+   \bra{i}_2 \bra{\phi_d^+}_{1A}\otimes \id_B\Big) \Big( \ket{\psi}_1\ket{\psi}_2\ket{\phi_d^+}_{AB}\Big)\\
&=\frac{1}{\gamma}\left(\frac{1}{d} \braket{i}{\psi}_1 \ket{\psi}_B + \frac{1}{d} \braket{i}{\psi}_2 \ket{\psi}_B\right) \\
&=\frac{2}{\gamma d} \braket{i}{\psi} \ket{\psi}_B.
\end{align}
Then if we set $M_{12A}=\sum_{i=1}^d\ketbra{r_i}_{12A}$, we have
\begin{align}
    \sum_{i=1}^d \tr_{12A}\Big( \ketbra{r_i}_{12A}\otimes \id_B \; \ketbra{\psi}_1\otimes\ketbra{\psi}_2\otimes \ketbra{\phi_d^+}_{AB} \Big)
    &= \sum_i \frac{2}{\gamma d} \frac{2}{\gamma d} \braket{i}{\psi}\braket{\psi}{i} \ketbra{\psi}_B \\ 
    &= \frac{4}{\gamma^2 d^2} \left( \sum_i  \abs{\braket{i}{\psi}}^2 \right)  \ketbra{\psi}_B \\
    &= \frac{4}{\gamma^2 d^2} \ketbra{\psi}_B,
\end{align}
where the identity $\sum_i  \abs{\braket{i}{\psi}}^2=\norm{\ket{\psi}}=1$ follows from the fact that $\ket{\psi}$ is a quantum state, hence a normalised vector.

In direct terms, if we prove that the vectors $\ket{r_i}$ are orthonormal and the constant $\gamma$ is calculated, it is enough to set the measurement operator as $M_{12A}=\sum_{i=1}^d\ketbra{r_i}_{12A}$ to obtain perfect teleportation with probability $p(d,k=2)=\left(2/\gamma d\right)^2$. As we will verify soon,
Lemma~\ref{lemma:orthonormality} ensures that the vectors $\ket{r_i}$ are orthonormal and that $\frac{1}{\gamma}=\sqrt{\frac{d}{k(k-1+d)}}$ holds.

\subsection{The $k=3$ case} 
For the $k=3$, we aim to generalise the states $\ket{r_i}:=\frac{\ket{i}_1\otimes \ket{\phi_d^+}_{2A} +  \ket{i}_2\otimes \ket{\phi_d^+}_{1A}}{\gamma}, \quad i\in\{1,\ldots, d\}$, in a way that it is permutation invariant in the spaces $\H_1\otimes\H_2\otimes\H_3$.
One idea is to set
\begin{align}
    \ket{r_i}:=\frac{\ket{s_i}_{12}\otimes \ket{\phi_d^+}_{3A} +  \ket{s_i}_{13}\otimes \ket{\phi_d^+}_{2A} + \ket{s_i}_{23}\otimes \ket{\phi_d^+}_{1A}}{\gamma}, \quad i\in\left\{1,\ldots, {d\choose 2}\right\}
\end{align}
where $\{\ket{s_i}\}_{i=1}^{d \choose 2}$ be an orthonormal basis for the symmetric subspace%
\footnote{The symmetric subspace of $\left(\mathbb{C}^d\right)^{\otimes {k}}$ has dimension ${k-1+d\choose k}$, and an orthonormal basis for this space may be obtained by a ``type'' approach and other standard techniques~\cite{WatrousBook,Harrow2013Church}.}
of $\left(\mathbb{C}^d\right)^{\otimes2}$. Following the same steps of the $k=2$ case, we see that
\begin{align}
    \Big(\bra{r_i}_{123A}\otimes \id_B\Big) \Big( \ket{\psi}_1\ket{\psi}_2\ket{\psi}_3\ket{\phi_d^+}_{AB}\Big) 
    &= \frac{3}{\gamma d} \ket{\psi}_B \bra{s_i}\Big(\ket{\psi}\otimes\ket{\psi}\Big),
\end{align}
and then
\begin{align}
    \sum_{i=1}^{d \choose 2} \tr_{123A}\Big( \ketbra{r_i}_{123A}\otimes \id_B \; &\ketbra{\psi}_1\otimes\ketbra{\psi}_2\otimes\ketbra{\psi}_3\otimes \ketbra{\phi_d^+}_{AB} \Big)\\
    =& \sum_i \frac{3}{\gamma d} \frac{3}{\gamma d} \Big(\bra{s_i}_{12}\ket{\psi}_1\ket{\psi}_2\Big) \Big(\bra{\psi}_1\bra{\psi}_2\ket{s_i}_{12}\Big) \ketbra{\psi}_B \\ 
    =& \frac{9}{\gamma^2 d^2} \left( \sum_i  \abs{\bra{s_i}_{12}\,\ket{\psi}_1\ket{\psi}_2}^2 \right)  \ketbra{\psi}_B \\
    =& \frac{9}{\gamma^2 d^2} \ketbra{\psi}_B,
\end{align}
where the identity $\sum_{i=1}^{d \choose 2} \abs{\bra{s_i}_{12}\,\ket{\psi}_1\ket{\psi}_2}^2=1$ follows from the fact that the state $\ket{\psi}\otimes\ket{\psi}$ belongs to the symmetric subspace of%
\footnote{To prove that, just notice that $\ket{\psi}\otimes\ket{\psi}$ belongs to the symmetric subspace of $\left(\mathbb{C}^d\right)^{\otimes2}$, we can write $\ket{\psi}\otimes\ket{\psi}$  as a linear combination of the vectors $\{\ket{s_i}\}_i$.}%
$\left(\mathbb{C}^d\right)^{\otimes2}$.

\subsection{The $k\in\mathbb{N}$ case} \label{sec:Arbitrary_k}
We are now in a position to define the general form of the states $\ket{r_i}$ which will form an orthonormal basis leading to Alice's measurement.
\begin{definition} \label{def:r_i}
Let $\{\ket{s_i}\}_{i=1}^{k-2+d \choose k-1}$ be an orthonormal basis for the symmetric subspace of $\left(\mathbb{C}^d\right)^{\otimes(k-1)}$, $V_{(ak)}:\left(\mathbb{C}^d\right)^{\otimes k} \to \left(\mathbb{C}^d\right)^{\otimes k} $ be the operator which swaps the system in position $a$ with the system in position $k$.
For any $k,d\in \mathbb{N}$ and $i\in\{1,\ldots, {k-2+d \choose k-1}\}$, we define the vectors $\ket{r_i}\in\left(\mathbb{C}^d\right)^{\otimes(k+1)}$ as
\begin{align}
    \ket{r_i}:=\sqrt{\frac{d}{k(k-1+d)}} \left( \sum_{a=1}^{k} V_{(ak)}\otimes \id_A \Big(\ket{s_i}_{12\ldots(k-1)} \otimes \ket{\phi_d^+}_{kA} \Big)\right),
\end{align}
where $\ket{\phi_d^+}:=\frac{1}{\sqrt{d}}\sum_{i=0}^{d-1}\ket{ii}$ is the maximally entangled qudit state.
\end{definition}

We start by proving that the vectors $\ket{r_i}$ presented in Def.~\ref{sec:Arbitrary_k} are indeed orthonormal.

\begin{lemma} \label{lemma:orthonormality}
    The vectors $\ket{r_i}\in\left(\mathbb{C}^d\right)^{\otimes(k+1)}$ presented in Def.~\ref{def:r_i} are orthonormal, that is $\braket{\phi_i}{\phi_j}=\delta_{ij}$.
\end{lemma}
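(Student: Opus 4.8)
The plan is to compute the inner product $\braket{r_i}{r_j}$ directly by expanding the double sum over the swap operators that appears in Def.~\ref{def:r_i}. Writing $\ket{r_i}=\sqrt{\tfrac{d}{k(k-1+d)}}\sum_{a=1}^k (V_{(ak)}\otimes\id_A)\big(\ket{s_i}_{12\ldots(k-1)}\otimes\ket{\phi_d^+}_{kA}\big)$, the overlap becomes
\begin{align}
\braket{r_i}{r_j}=\frac{d}{k(k-1+d)}\sum_{a,b=1}^{k}\bra{s_i}_{12\ldots(k-1)}\bra{\phi_d^+}_{kA}\,(V_{(ak)}V_{(bk)}\otimes\id_A)\,\ket{s_j}_{12\ldots(k-1)}\ket{\phi_d^+}_{kA}.
\end{align}
First I would split this into the diagonal terms $a=b$ and the off-diagonal terms $a\neq b$, because $V_{(ak)}V_{(ak)}=\id$ makes each diagonal term trivial, contributing $\braket{s_i}{s_j}\braket{\phi_d^+}{\phi_d^+}=\delta_{ij}$ (noting $\ket{s_i}$ lives on systems $1,\ldots,k-1$ so $V_{(kk)}$ is the identity and the generic swap $V_{(ak)}$ just relocates the $\ket{\phi_d^+}$ half into slot $a$). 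Summing over the $k$ diagonal terms yields $k\delta_{ij}$.

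The substantive computation is the off-diagonal block. For $a\neq b$, the product $V_{(ak)}V_{(bk)}$ is a $3$-cycle (or a transposition, depending on indexing) acting on positions $a,b,k$, and one of the $\ket{\phi_d^+}_{kA}$ factors gets moved onto a system carrying a component of $\ket{s_i}$ or $\ket{s_j}$. Here I would use the maximally-entangled-state identity $\id\otimes A\ket{\phi_d^+}=A^T\otimes\id\ket{\phi_d^+}$ together with the partial overlap $\bra{\phi_d^+}_{kA}(\cdots)\ket{\phi_d^+}_{kA}=\tfrac{1}{d}\tr(\cdots)$-type contractions from the teleportation trick to collapse the entangled systems, reducing each off-diagonal term to an inner product of the form $\tfrac{1}{d}\bra{s_i}(\text{transposition on symmetric subspace})\ket{s_j}$. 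The crucial fact is that $\ket{s_i},\ket{s_j}$ lie in the \emph{symmetric} subspace of $(\mathbb{C}^d)^{\otimes(k-1)}$, so any transposition acting on them acts as the identity, and each off-diagonal term again reduces to $\tfrac{1}{d}\delta_{ij}$. There are $k(k-1)$ such ordered pairs, giving a total off-diagonal contribution of $\tfrac{k(k-1)}{d}\delta_{ij}$.

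Combining the two contributions gives
\begin{align}
\braket{r_i}{r_j}=\frac{d}{k(k-1+d)}\left(k+\frac{k(k-1)}{d}\right)\delta_{ij}=\frac{d}{k(k-1+d)}\cdot\frac{k(d+k-1)}{d}\,\delta_{ij}=\delta_{ij},
\end{align}
which is exactly orthonormality, and simultaneously confirms the normalisation constant $\tfrac{1}{\gamma}=\sqrt{\tfrac{d}{k(k-1+d)}}$ announced after the $k=2$ case. The main obstacle I anticipate is bookkeeping in the off-diagonal terms: one must carefully track which system each surviving $\ket{\phi_d^+}$ leg is contracted against after the permutation $V_{(ak)}V_{(bk)}$ is applied, and verify that the leftover permutation genuinely restricts to a transposition \emph{within} the symmetric-subspace factor (rather than mixing it with the $A$-leg in a way that does not simply reduce to $\braket{s_i}{s_j}$). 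Getting the symmetry argument exactly right — that the permutation-invariance of $\ket{s_i}$ kills every off-diagonal permutation down to $\delta_{ij}$ — is the step where a sign or index slip would break the clean cancellation, so I would treat a small case ($k=2,3$, already worked out above) as a sanity check before writing the general contraction.
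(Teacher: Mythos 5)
Your proposal is correct and takes essentially the same approach as the paper's own proof: the paper likewise expands $\braket{r_i}{r_j}$ into the $k^2$ overlaps of the vectors $\ket{\Xi(a,i)}:=V_{(ak)}\otimes\id_A\big(\ket{s_i}_{12\ldots(k-1)}\otimes\ket{\phi_d^+}_{kA}\big)$, showing the $k$ diagonal terms each give $\delta_{ij}$ and the $k(k-1)$ off-diagonal terms each give $\delta_{ij}/d$ via the teleportation trick plus the permutation invariance of the symmetric-subspace vectors, then sums to $\frac{d}{k(k-1+d)}\big(k+\frac{k(k-1)}{d}\big)\delta_{ij}=\delta_{ij}$. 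The bookkeeping issue you flag is resolved in the paper exactly as you anticipate: it reduces to the representative pair $a=k$, $a'=1$ and uses the symmetry of $\ket{s_i}$ to absorb the leftover reordering of legs after one leg of $\ket{s_j}$ is teleported onto system $A$.
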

\begin{proof}
    First, notice that {it} follows from the ``teleportation trick'' presented in Eq.~\eqref{eq:TeleportationTrick} that for any vector $\ket{s_j}_{2\ldots k}$, it holds that
\begin{align}
    \Big( \id_{12\ldots(k-1)} \otimes \bra{\phi_d^+}_{kA}\Big)  \Big( \ket{s_j}_{2\ldots k}\otimes \ket{\phi_d^+}_{1A}\Big)
    = \frac{1}{d} \ket{s_j}_{2\ldots A}.
\end{align}
    Now, let us focus on the vectors $\ket{\Xi(a,i)}:=V_{(ak)}\otimes \id_A \Big(\ket{s_i}_{12\ldots(k-1)} \otimes \ket{\phi_d^+}_{kA} \Big)$. Direct calculation shows that $\braket{\Xi(a,i)}{{\Xi(a,j)}}=\delta_{ij}$ holds. But, when $a\neq a'$, we have $\braket{\Xi(a,i)}{{\Xi(a',j)}}=\frac{\delta_{ij}}{d}$. To show that, let us first fix $a=k$ and $a'=1$ to see that
\begin{align}
   \braket{\Xi(a,i)}{{\Xi(a',j)}}=&\Big( \bra{s_i}_{12\ldots(k-1)} \otimes \bra{\phi_d^+}_{kA}\Big)  \Big( \ket{s_j}_{2\ldots k} \otimes \ket{\phi_d^+}_{1A}\Big) \\
   =& \Big(\bra{s_i}_{12\ldots(k-1)} \Big)\Big( \id_{12\ldots(k-1)} \otimes \bra{\phi_d^+}_{kA}\Big)  \Big( \ket{s_j}_{2\ldots k}\otimes \ket{\phi_d^+}_{1A}\Big) \\
   =& \frac{1}{d}\Big( \bra{s_i}_{A2\ldots(k-1)}\Big) \Big( \ket{s_j}_{2\ldots A} \Big) \\
   =&\frac{\delta_{ij}}{d}.
\end{align}
Hence, we have that 
\begin{align}
    \braket{r_i}{r_j} & = \frac{d}{k(k-1+d)} \left(\sum_{a=1}^k \sum_{a'=1}^k  \braket{\Xi(a,i)}{{\Xi(a',j)}} \right)\\ 
    & = \frac{d}{k(k-1+d)} \; \left(\left(\sum_{a=1}^k \delta_{ij} \right) + \left( \sum_{a\neq a' } \frac{\delta_{ij}}{d}\right)\right) \\ 
    & = \delta_{ij}\frac{d}{k(k-1+d)} \; \left(k +  \frac{k(k-1)}{d}\right) \\ 
    & = \delta_{ij}\frac{d}{k(k-1+d)} \; \left( \frac{kd+ k(k-1)}{d}\right) \\ 
    & = \delta_{ij}\frac{1}{(k-1+d)} \; \left( \frac{d+ k-1}{1}\right) \\ 
    & = \delta_{ij},
\end{align}
which concludes the proof.
\end{proof}

Now, we prove that if we perform a quantum measurement using the orthonormal basis $\{\ket{r_i}\}_i$, we perform multicopy state teleportation with a success probability of $p(d,k)=\frac{k}{d(k-1+d)}$, as presented in Thm.~\ref{thm:1}.
\begin{restatable}{lemma}{Attainability}
\label{lemma:attainability}
If we set 
\begin{align}
    M_{1\ldots kA} = \sum_{i=1}^{k-2+d \choose k-1} \ketbra{r_i}_{1\ldots kA},
\end{align}
it holds true that, for any normalised vector $\ket{\psi}\in\mathbb{C}^d$ and any $d,k\in\mathbb{N}$ we have
\begin{align} \label{eq:psLemma}
    \tr_{1\ldots kA}\Big(\ketbra{\psi}{\psi}^{\otimes k}_{12\ldots k}\otimes \ketbra{\phi^+_d}{\phi^+_d}_{AB} \; M_{1\ldots kA}\otimes \id_{B}\Big)= p(d,k) \ketbra{\psi}{\psi}_B
\end{align}
    with 
    \begin{align}
	p(d,k)=\frac{k}{d(k-1+d)}.
\end{align}
\end{restatable}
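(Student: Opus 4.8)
The plan is to generalise the explicit $k=2$ and $k=3$ computations carried out above, using the orthonormality of $\{\ket{r_i}\}$ from Lemma~\ref{lemma:orthonormality} together with a completeness relation for the symmetric subspace. Since $M_{1\ldots kA}=\sum_i\ketbra{r_i}$ is a sum of rank-one projectors onto orthonormal vectors, the left-hand side of Eq.~\eqref{eq:psLemma} splits as a sum over $i$, and for each $i$ the partial trace over systems $1\ldots kA$ collapses to $\ket{w_i}_B\bra{w_i}_B$, where $\ket{w_i}_B:=\big(\bra{r_i}_{1\ldots kA}\otimes\id_B\big)\big(\ket{\psi}^{\otimes k}_{1\ldots k}\otimes\ket{\phi_d^+}_{AB}\big)$. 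So the first step is to evaluate this single overlap.

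To compute $\ket{w_i}_B$, I would expand $\bra{r_i}$ using Def.~\ref{def:r_i} as a sum of $k$ terms indexed by the swap $V_{(ak)}$. The decisive observation is that $\ket{\psi}^{\otimes k}_{1\ldots k}$ is permutation invariant, so $V_{(ak)}\ket{\psi}^{\otimes k}=\ket{\psi}^{\otimes k}$ for every $a$, and all $k$ terms collapse to the same contribution $\big(\bra{s_i}_{12\ldots(k-1)}\otimes\bra{\phi_d^+}_{kA}\otimes\id_B\big)\big(\ket{\psi}^{\otimes k}_{1\ldots k}\otimes\ket{\phi_d^+}_{AB}\big)$. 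This factorises across disjoint systems: the bra $\bra{s_i}$ on systems $1,\ldots,k-1$ returns the scalar $\braket{s_i}{\psi^{\otimes(k-1)}}$, while the teleportation trick of Eq.~\eqref{eq:TeleportationTrick} applied to systems $k,A,B$ sends $\ket{\psi}_k\otimes\ket{\phi_d^+}_{AB}$ to $\tfrac{1}{d}\ket{\psi}_B$. Collecting the prefactor $\sqrt{d/(k(k-1+d))}$ from the definition, the factor $k$ from the collapsed sum, and the $1/d$ from the trick, I expect $\ket{w_i}_B=\sqrt{p(d,k)}\,\braket{s_i}{\psi^{\otimes(k-1)}}\,\ket{\psi}_B$, since $\frac{k}{d}\sqrt{\frac{d}{k(k-1+d)}}=\sqrt{\frac{k}{d(k-1+d)}}=\sqrt{p(d,k)}$.

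Substituting back, the left-hand side of Eq.~\eqref{eq:psLemma} becomes $p(d,k)\big(\sum_i\abs{\braket{s_i}{\psi^{\otimes(k-1)}}}^2\big)\ketbra{\psi}_B$. The final step is the completeness relation: because $\ket{\psi}^{\otimes(k-1)}$ lies in the symmetric subspace of $(\mathbb{C}^d)^{\otimes(k-1)}$ and $\{\ket{s_i}\}$ is an orthonormal basis of precisely that subspace, the resolution of the identity on the symmetric subspace gives $\sum_i\abs{\braket{s_i}{\psi^{\otimes(k-1)}}}^2=\norm{\ket{\psi}^{\otimes(k-1)}}^2=1$. This yields $p(d,k)\ketbra{\psi}_B$ and closes the argument, mirroring the identities $\sum_i\abs{\braket{s_i}{\psi\otimes\psi}}^2=1$ used in the $k=3$ case.

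I do not expect a serious obstacle; the content is bookkeeping rather than a new idea. The one point that genuinely matters is the permutation-invariance step, which makes the $k$ swap terms coincide and produces the factor of $k$ responsible for the numerator of $p(d,k)$; everything else is the teleportation trick and a resolution of the identity on the symmetric subspace. A minor care is needed to confirm that $\ket{\psi}^{\otimes(k-1)}$ indeed lies in the symmetric subspace so that the completeness relation over $\{\ket{s_i}\}$ applies, but this is immediate since a tensor power of a single vector is manifestly permutation invariant.
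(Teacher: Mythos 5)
Your proof is correct and follows essentially the same route as the paper's: expand $M_{1\ldots kA}=\sum_i\ketbra{r_i}$, use permutation invariance of $\ket{\psi}^{\otimes k}$ to collapse the $k$ swap terms (the paper's vectors $\ket{\Xi(a,i)}$), apply the teleportation trick of Eq.~\eqref{eq:TeleportationTrick} on systems $k,A,B$, and close with the resolution of the identity on the symmetric subspace applied to $\ket{\psi}^{\otimes(k-1)}$. Your bookkeeping is in fact slightly tidier than the paper's: the squared amplitude per outcome is $\frac{d}{k(k-1+d)}\cdot\frac{k^2}{d^2}\abs{\braket{s_i}{\psi^{\otimes(k-1)}}}^2$, which sums to $\frac{k}{d(k-1+d)}$ and correctly carries the factor $k$ that the paper's final display loses to an apparent typo (its $\frac{1}{d(k-1+d)}$ should read $\frac{k}{d(k-1+d)}$, consistent with Thm.~\ref{thm:1}), and you also correctly phrase the completeness step with squared moduli rather than the paper's literal sum of amplitudes.
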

\begin{proof}
The proof follows from similar steps of the calculations made in previous sections with $k=2$ and $k=3$. For convenience, let us define the vector
$\ket{\Xi(a,i)}:=V_{(ak)}\otimes \id_A \Big(\ket{s_i}_{12\ldots(k-1)} \otimes \ket{\phi_d^+}_{kA} \Big)$ and evaluate the quantity
\begin{align}
    &\Big(\bra{\Xi(a=k,i)}_{1\ldots k A} \otimes \id_B \Big) \ket{\psi}_1\ldots\ket{\psi}_k\ket{\phi_d^+}_{AB}  \\
    =& \Big(\bra{s_i}_{12\ldots(k-1)} \otimes \bra{\phi_d^+}_{kA} \Big) \ket{\psi}_1\ldots\ket{\psi}_k\ket{\phi_d^+}_{AB}\\
    =&    \Big(\bra{s_i}_{12\ldots(k-1)}\;\ket{\psi}_1\ldots\ket{\psi}_{k-1}\Big) \Big(\bra{\phi_d^+}_{kA} \otimes \id_B \Big) \ket{\psi}_k\ket{\phi_d^+}_{AB}\\
    =& \frac{1}{d}  \Big(\bra{s_i}_{12\ldots(k-1)}\;\ket{\psi}_1\ldots\ket{\psi}_{k-1}\Big) \ket{\psi}_B \\
\end{align}
Analogous calculation shows that, for every $a\in\{1,\ldots, k\}$ and every $i\in\left\{1,\ldots { k-2+d \choose k-1} \right\}$, it holds that 
\begin{align}
    \Big(\bra{\Xi(a,i)}_{1\ldots k A} \otimes \id_B \Big) \ket{\psi}_1\ldots\ket{\psi}_k\ket{\phi_d^+}_{AB} 
    =
     \frac{1}{d}  \Big(\bra{s_i}_{12\ldots(k-1)}\;\ket{\psi}_1\ldots\ket{\psi}_{k-1}\Big) \ket{\psi}_B,
\end{align}
and since $\ket{\psi}_1\ldots\ket{\psi}_{k-1}$ belongs to the symmetric subspace of $({\mathbb{C}^d})^{\otimes (k-{1})}$, it holds that 
\begin{align}
\sum_{i=1}^{k-2+d \choose k-1} \bra{s_i}_{12\ldots(k-1)}\;\ket{\psi}_1\ldots\ket{\psi}_{k-1} = 1,
\end{align} hence
\begin{align}
    \sum_{a=1}^k \frac{1}{d} \ket{\psi}_B \sum_{i=1}^{k-2+d \choose k-1} \bra{s_i}_{12\ldots(k-1)}\;\ket{\psi}_1\ldots\ket{\psi}_{k-1}
    =\frac{k}{d}\ket{\psi}_B.
\end{align}
    We can then finish the proof by direct calculation,
\begin{align}
    &\tr_{1\ldots kA}\Big(\ketbra{\psi}{\psi}^{\otimes k}_{12\ldots k}\otimes \ketbra{\phi^+_d}{\phi^+_d}_{AB} \; M_{1\ldots kA}\otimes \id_{B}\Big)\\
    =&\sum_{i=1}^{k-2+d \choose k-1} \tr_{1\ldots kA}\Big(\ketbra{\psi}{\psi}^{\otimes k}_{12\ldots k}\otimes \ketbra{\phi^+_d}{\phi^+_d}_{AB} \; \ketbra{r_i}_{1\ldots k,A}\otimes \id_{B}\Big) \\
    =&\frac{d}{k(k-1+d)}  \frac{k}{d^2} \ketbra{\psi}_B \\ 
    =&\frac{1}{d(k-1+d)}  \ketbra{\psi}_B.
\end{align}
\end{proof}
In order to have another perspective on why the measurements $M_{1\ldots kAB}$ attain the success probability of $p(d,k)=\frac{1}{d(k-1+d)}$, in Sec.~\ref{sec:alternative} we present an alternative proof of Lemma~\ref{lemma:attainability} which makes use of Lemma~\ref{lemma:eigendecomposition}.

Finally, we may recognise that the measurement presented in Lemma~\ref{lemma:attainability}, is precisely the eigendecomposition of the measurement presented in the statement of Thm.~\ref{thm:1}.
\begin{restatable}{lemma}{EigenDecomposition}
 \label{lemma:eigendecomposition}
    The operator  $M_{1\ldots kA}:=\frac{dk}{(k-1+d)}\Big(P^{sym}_{1\ldots k}\otimes \id_A\Big)\left( \id_{1\ldots k}\otimes \ketbra{\phi_d^+}_{kA} \right)\Big(P^{sym}_{1\ldots k}\otimes \id_A\Big)$ is a projector (hence $M\leq \id$), and has its eigendecomposition given by
\begin{align}
    M_{1\ldots kA} = \sum_{i=1}^{k-2+d \choose k-1} \ketbra{r_i}_{1\ldots kA},
\end{align}
where $ \ket{r_i}:=\sqrt{\frac{d}{k(k-1+d)}} \left( \sum_{a=1}^{k} V_{(ak)}\otimes \id_A \Big(\ket{s_i}_{12\ldots(k-1)} \otimes \ket{\phi_d^+}_{kA} \Big)\right)$ are the vectors in Def.~\ref{def:r_i}
\end{restatable}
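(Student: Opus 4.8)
The plan is to prove the single operator identity $M_{1\ldots kA}=\sum_{i}\ketbra{r_i}_{1\ldots kA}$ and then let Lemma~\ref{lemma:orthonormality} do the remaining work: since the $\ket{r_i}$ are orthonormal, the right-hand side is automatically an orthogonal projector, with eigenvalue $1$ on $\operatorname{span}\{\ket{r_i}\}$ and $0$ on its complement. This simultaneously shows that $M$ is a projector (in particular $M\leq\id$) and that the displayed sum \emph{is} its eigendecomposition. So the entire content of the lemma reduces to verifying that identity. Throughout I abbreviate $P:=P^{sym}_{1\ldots k}\otimes\id_A$ and use the form of $M$ from Thm.~\ref{thm:1}, namely $M=\tfrac{dk}{k-1+d}\,P\big(\id_{1\ldots(k-1)}\otimes\ketbra{\phi_d^+}_{kA}\big)P$.

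The first step is to rewrite $\ket{r_i}$ from Def.~\ref{def:r_i} in a ``symmetrised'' form. The key is a coset decomposition of $S_k$ over the stabiliser $S_{k-1}$ of the last tensor factor, using the transpositions $(a\,k)$ as coset representatives: because $\ket{s_i}$ is symmetric on systems $1,\ldots,k-1$, for any vector $\ket{e}_k$ one has
\begin{equation}
\sum_{a=1}^{k} V_{(ak)}\Big(\ket{s_i}_{1\ldots(k-1)}\otimes\ket{e}_k\Big)=k\,P^{sym}_{1\ldots k}\Big(\ket{s_i}_{1\ldots(k-1)}\otimes\ket{e}_k\Big).
\end{equation}
Expanding $\ket{\phi_d^+}_{kA}$ in the computational basis and applying this to each term gives
\begin{equation}
\ket{r_i}=\sqrt{\tfrac{d}{k(k-1+d)}}\;k\;P\Big(\ket{s_i}_{1\ldots(k-1)}\otimes\ket{\phi_d^+}_{kA}\Big),
\end{equation}
where crucially the prefactor squares to $\big(\sqrt{\tfrac{d}{k(k-1+d)}}\,k\big)^2=\tfrac{dk}{k-1+d}$, matching the constant in $M$. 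Summing over $i$ and using that $\{\ket{s_i}\}$ is an orthonormal basis of the symmetric subspace (so $\sum_i\ketbra{s_i}=P^{sym}_{1\ldots(k-1)}$) yields
\begin{equation}
\sum_i\ketbra{r_i}=\frac{dk}{k-1+d}\;P\Big(P^{sym}_{1\ldots(k-1)}\otimes\ketbra{\phi_d^+}_{kA}\Big)P.
\end{equation}

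Comparing this with the expression for $M$, the lemma now follows provided $P\big(R\otimes\ketbra{\phi_d^+}_{kA}\big)P=0$, where $R:=\id_{1\ldots(k-1)}-P^{sym}_{1\ldots(k-1)}$ projects onto the non-symmetric part of the first $k-1$ factors. This vanishing is the crux of the argument. Since $X:=R\otimes\ketbra{\phi_d^+}_{kA}$ is a projector, $PXP=(XP)^{\dagger}(XP)$ is zero iff $XP=0$, so it suffices to show $XP=0$. For any input, $P$ produces a vector $\ket{\Psi'}$ that is symmetric on systems $1,\ldots,k$, hence in particular invariant under the subgroup $S_{k-1}$ permuting $1,\ldots,k-1$; contracting systems $k,A$ against $\bra{\phi_d^+}_{kA}$ commutes with these permutations and so returns a vector on $1,\ldots,k-1$ that is still $S_{k-1}$-invariant, i.e. symmetric, and is therefore annihilated by $R$. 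Thus $XP=0$, which completes the identity and hence the lemma. I expect this last step—the fact that projecting the maximally entangled pair $kA$ onto $\ket{\phi_d^+}$ \emph{promotes} the full $S_k$-symmetry to $S_{k-1}$-symmetry on the surviving factors—to be the genuinely non-routine ingredient; the coset identity and the bookkeeping of constants are comparatively mechanical.
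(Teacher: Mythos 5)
Your proof is correct, and it is essentially the paper's computation run in the opposite direction. The paper derives $M=F_{sym_k}(sym_{k-1})$ from the explicit formula for the irreducible projectors $F_\mu(\alpha)$, then expands both copies of $P^{sym}_{1\ldots k}$ via the same coset decomposition $S_k=\bigsqcup_{a=1}^k (a\,k)\,S_{k-1}$ you use, absorbs the $S_{k-1}$ factors into $P^{sym}_{1\ldots(k-1)}=\sum_i\ketbra{s_i}$, and reads off $\sum_i\ketbra{r_i}$; you instead symmetrise each $\ket{r_i}$ into $\ket{r_i}\propto \big(P^{sym}_{1\ldots k}\otimes\id_A\big)\big(\ket{s_i}\otimes\ket{\phi_d^+}_{kA}\big)$ and sum, which is the same identity read right-to-left, and like the paper you delegate the projector property to the orthonormality of the $\ket{r_i}$ (Lemma~\ref{lemma:orthonormality}). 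The genuine divergence is how the mismatch between $\id_{1\ldots(k-1)}$ and $P^{sym}_{1\ldots(k-1)}$ inside the sandwich is handled: the paper disposes of it via Lemma~\ref{L1} (symmetric projectors of $S_k$ absorb Young projectors of subgroups), whereas you prove $XP=0$ for $X=R\otimes\ketbra{\phi_d^+}_{kA}$ by arguing that contraction of systems $k,A$ against $\bra{\phi_d^+}$ preserves $S_{k-1}$-invariance. Your argument is valid, but contrary to your closing remark it is not the non-routine ingredient --- it is overkill: since $\big(P^{sym}_{1\ldots k}\otimes\id_A\big)\big(P^{sym}_{1\ldots(k-1)}\otimes\id_{kA}\big)=P^{sym}_{1\ldots k}\otimes\id_A$ (average the relation $P^{sym}_{1\ldots k}V_\pi=P^{sym}_{1\ldots k}$ over $\pi\in S_{k-1}$), the swap of $P^{sym}_{1\ldots(k-1)}$ for $\id_{1\ldots(k-1)}$ between the outer projectors is a one-line absorption, with no property of $\ket{\phi_d^+}$ needed. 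One small point in your favour: you silently and correctly use the form $\id_{1\ldots(k-1)}\otimes\ketbra{\phi_d^+}_{kA}$ from Thm.~\ref{thm:1}, where the lemma's statement contains the typo $\id_{1\ldots k}$, which would be dimensionally inconsistent.
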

The proof of Lemma~\ref{lemma:eigendecomposition} will make use of some group representation theory methods and is presented in Sec.~\ref{sec:Proofs}.

\section{Proof of optimality and group representation theory methods} \label{sec:Proofs}
In this section, we present the proof of Thm.~\ref{thm:1}, but before proceeding with the proof, we introduce some ideas borrowed from group representation theory and develop the necessary notation. 

\subsection{Partitions and Young frames}
A partition $\alpha$ of a natural number $k$, denoted by $\alpha \vdash k$, is a sequence of positive numbers $\alpha=(\alpha_1,\alpha_2,\ldots,\alpha_r)$ such that
\begin{align}
\label{eq:partition}
\alpha_1\geq \alpha_2\geq \ldots \geq \alpha_r>0,\qquad \sum_{i=1}^r\alpha_i=k\,.
\end{align}
Every partition can be visualized as a \textit{Young frame} which is a collection of boxes arranged in left-justified rows. The number of Young frames for a fixed number $k$ can be evaluated recursively~\cite{vershik}. In this paper by $\alpha,\mu$ we denote  Young frames with $k,k+1$ boxes respectively. The length of the first column in a given Young frame $\mu$ (equivalently, number of summed elements in~\eqref{eq:partition}) is denoted as a height $\operatorname{ht}(\mu)$.  From the set of all Young frames with $k$ boxes we distinguish a particular one called a symmetric frame denoted by $sym_k$ - $k$ boxes arranged in one row. 
By writing $\mu=\alpha+\Box$ we denote a Young diagram $\mu \vdash k+1$ obtained from $\alpha \vdash k$ by adding one box, while by $\alpha=\mu-\Box$ by subtracting a single box.
\begin{figure}[h!]
\centering
\begin{subfigure}{.5\textwidth}
  \centering
  \includegraphics[width=.6\linewidth]{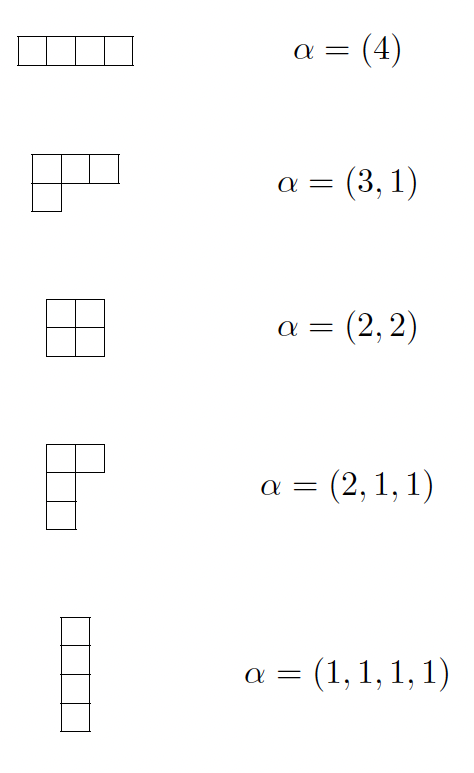}
  \caption*{A}
  \label{fig:young_fames}
\end{subfigure}%
\begin{subfigure}{.5\textwidth}
  \centering
  \includegraphics[width=1\linewidth]{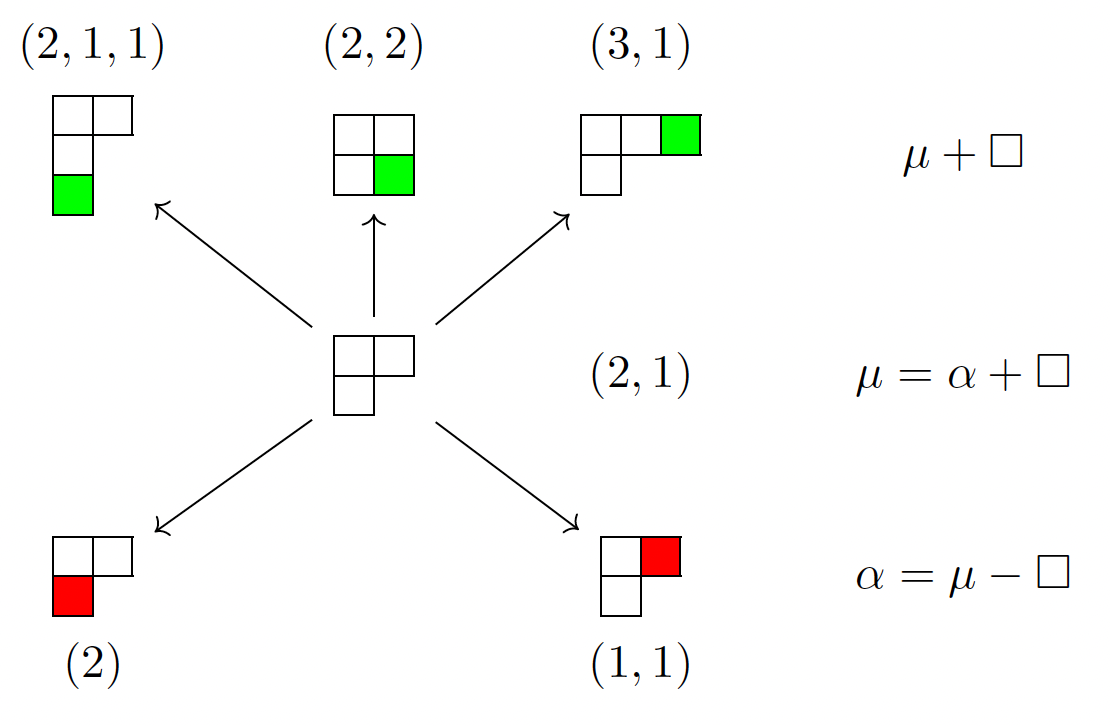}
  \vspace{0.8cm}
  \caption*{B}
  \label{fig:young_relations}
\end{subfigure}
\caption{Panel A presents five possible Young frames for $k=4$. Panel B presents possible Young frames obtained from a frame $\mu = (2,1)$ by adding a single box, depicted here in green, and by subtracting a single box, depicted in red. }
\label{fig:young_frame_and_relations}
\end{figure}

\noindent
With a given Young frame $\mu \vdash k$ we can associate a \textit{Young tableau} which is obtained by filling in the boxes of the Young frame with numbers $1,\ldots,k$. Numbers in the boxes must strictly increase from left to right in every row and from top to bottom in every column. The number of the standard Young tableaux, denoted by $d_\mu$, can be evaluated by using certain combinatorial expressions like the hook-length formula~\cite{ceccherini2010representation,fulton1997young}. With a given Young frame $\mu \vdash k$ we can associate semi-standard Young tableaux. These objects are obtained by filling in the boxes of the Young frame with numbers $1,\ldots, d$, where $d$ is some natural number. In every row, numbers must be arranged in non-decreasing order from the left to the right and strictly increasing order in every column from the top to the bottom. The number of all semi-standard Young frames for a given tableau is denoted by $m_\mu$, and it can also be evaluated by combinatorial rules like the hook-content formula~\cite{ceccherini2010representation,fulton1997young}. It is easy to see that when $d<k$ not all semi-standard Young frames exist. Namely, we have to exclude all Young frames whose height is greater than $d$.
In the particular case, when we consider symmetric frames of $k-1,k,k+1$ boxes,  we know closed expressions for the numbers of the corresponding (semi-)standard Young tableaux $d_{sym_{k-1}}, d_{sym_k},d_{sym_{k+1}}, m_{sym_{k-1},}$ $m_{sym_k}$, and $m_{sym_{k+1}}$:
\begin{align}
&d_{sym_{k-1}}=d_{sym_k}=d_{sym_{k+1}}=1,\\
&m_{sym_{k-1}}={k-2+d\choose k-1} \label{eq:k-1} \\
    &m_{sym_k}={k-1+d\choose k} \label{eq:k}\\ 
    &m_{sym_{k+1}}={k+d\choose k+1}. \label{eq:k+1} 
\end{align}

\subsection{The Schur-Weyl duality}
\label{sec:SW}
We start from defining the \textit{permutational representation} $V:S_k$ $\rightarrow \operatorname{Hom}(\mathcal{(\mathbb{C}}^{d})^{\otimes k})$ of the symmetric group $S_k$ in the Hilbert space $\mathcal{H}=(\mathbb{C}^d)^{\otimes k}$. The elements of $V(S_k)$ permute vectors in $(\mathbb{C}^d)^{\otimes k}$ according to a given permutation $\sigma \in S_k$ as
\begin{align}
\label{eq:actionS_k}
V_\sigma |v_1\>\otimes |v_2\>\otimes
	\ldots \otimes |v_k\>:=|v_{\sigma ^{-1}(1)}\>\otimes |v_{\sigma
			^{-1}(2)}\>\otimes \ldots \otimes |v_{\sigma ^{-1}(k)}\>\,.
\end{align}
The representation $V(S_k)$ extends in a natural way to the
representation of the group algebra $\mathbb{C}[S_k]:=\operatorname{span}_{\mathbb{C}}\{V_\sigma:\sigma \in S_k\}$. In the similar manner, we can define a diagonal action $U^{\otimes k}: \SU(d) \mapsto \operatorname{Hom}(\mathbb{C}^d)^{\otimes { k} }$ of elements $U$ from the special unitary $\SU(d)$. The elements $U^{\otimes k}$ act in  $(\mathbb{C}^d)^{\otimes { k} }$ as
\begin{align}
\label{eq:actionUk}
U^{\otimes k}|v_1\>\otimes |v_2\>\otimes
	\ldots \otimes |v_k\>:=U|v_1\>\otimes U|v_2\>\otimes
	\ldots \otimes U|v_k\>.
\end{align}
It is easy to see that actions Eqs.~\eqref{eq:actionS_k},~\eqref{eq:actionUk} commute and due to this property, there exists a basis in which the tensor product space $(\mathbb{C}^d)^{\otimes k}$, as well as $V_\sigma, U^{\otimes k}$, can be decomposed as 
\begin{align}
\label{wedd}
&(\mathbb{C}^d)^{\otimes k}=\bigoplus_{\substack{\mu \vdash k \\ \operatorname{ht}(\mu)\leq d}} \mathcal{U}_{\mu}\otimes \mathcal{S}_{\mu}\,,\\
&V_\sigma=\bigoplus_{\substack{\mu \vdash k \\ \operatorname{ht}(\mu)\leq d}} \id_{\mathcal{U}_\mu}\otimes \varphi^{\mu}(\sigma)\,,\\
&U^{\otimes k}=\bigoplus_{\substack{\mu \vdash k \\ \operatorname{ht}(\mu)\leq d}} U^{\mu}\otimes \id_{\mathcal{S}_\mu}\,,
\end{align}
where the direct sum runs over all Young frames of $k$ boxes with a height no larger than the local dimension $d$. By $\varphi^{\mu}(\sigma), U^{\mu}$ we denote irreducible representations of $V_\sigma, U^{\otimes k}$, respectively. 
From the decomposition Eq.~\eqref{wedd} we deduce that for a given irreducible representation (irrep) $\mu$ of $S_k$, the space $\mathcal{U}_{\mu}$ is the corresponding multiplicity space of dimension $m_{\mu}$ (multiplicity of irrep $\mu=$ number of semi-standard Young tableaux for integer $d$), while the space $\mathcal{S}_{\mu}$ is {the} representation space of dimension $d_{\mu}$ (dimension of irrep $\mu=$ number of standard Young tableaux). It means permutations are represented non-trivially only on the space $\mathcal{S}_{\mu}$. 

By $P_\mu,P_{\alpha}$ we denote the Young projectors onto irreps of the symmetric groups $S_k,S_{k+1}$ labelled by $\mu$ and $\alpha$ respectively. For a fixed $\mu \vdash k$, the Young projector $P_\mu$ on the space $\mathcal{U}_{\mu}\otimes \mathcal{S}_{\mu}$ is represented on $(\mathbb{C}^d)^{\otimes k}$ as
\begin{align}
\label{Yng_proj}
P_{\mu}=\frac{d_{\mu}}{k!}\sum_{\sigma \in S_k}\chi^{\mu}(\sigma^{-1})V_\sigma,
\end{align}
where $\chi^{\mu}(\sigma^{-1})=\tr(\varphi^{\mu}(\sigma))$ is the irreducible character associated with the irrep indexed by $\mu$.When we have a symmetric Young frame $\mu \equiv sym_k$, we denote the corresponding Young projector by explicitly writing underlying systems as $P^{sym}_{1\ldots k}$.  Then we have
\begin{align}
\label{Yng_proj_sym}
 P^{sym}_{1\ldots k}=\frac{1}{k!}\sum_{\sigma \in S_k}V_\sigma,   
\end{align} 
since all irreducible characters from Eq.~\eqref{Yng_proj} in this case are equal to 1.
For the fixed symmetric group, the Young projectors satisfy the orthogonality property $P_{\mu}P_{\nu}=\delta_{\mu \nu}P_{\mu}$ with the trace rule $\tr(P_{\mu})=m_{\mu}d_{\mu}$. For irrep $\mu$ of $S_k$
we define the natural representation of the matrix basis for the irrep $\mu$ including multiplicities:
\begin{align}
    \label{eq:E}
 \forall \mu \vdash k,\quad i,j=1,\ldots,d_{\mu}\quad   E_{ij}^\mu=
    \frac{d_{\mu}}{k!}\sum_{\sigma\in S_k} \varphi^\mu_{ji}(\sigma^{-1})V_{\sigma}.
\end{align}
Here $\varphi^\mu_{ji}(\sigma^{-1})$ are matrix elements of irrep $\varphi^{\mu}(\sigma^{-1})$ and $i,j=1,\ldots, d_{\mu}$. 
The basis operators given by Eq.~\eqref{eq:E} satisfy the following rules:
\begin{align}
E_{ij}^\mu E_{kl}^\nu=\delta^{\mu \nu}\delta_{jk}E_{il}^{\mu}, \qquad \tr(E_{ij}^\mu)=m_{\mu}\delta_{ij}.
\end{align}
It is easy to see the diagonal operators, i.e. the operators of the form $E^{\mu}_{ij}$, are projectors of rank $m_\mu$.
 There is a connection between Young projectors $P_\mu$ given by Eq.~\eqref{Yng_proj} and irreducible matrix basis operators $E^{\mu}_{ii}$ given by Eq.~\eqref{eq:E} as
\begin{align}
\label{eq:connectionEvP}
\forall \mu\vdash k \quad P_{\mu}=\sum_{i=1}^{d_{\mu}}E^{\mu}_{ii}.
\end{align}
Now consider {the} subgroup $S_{k-1}\subset S_k$. We can restrict an irrep $\mu$ to this subgroup and obtain a representation called the {\it reduced representation}. Then $\alpha$ labels its irreducible components within irrep $\mu$, determining blocks. What is more, each copy of irrep $\alpha$ appears with multiplicity one, so we deal with the multiplicity-free process. We can choose the basis in such a way that
 the matrix elements of $\varphi^\mu(\sigma)$ are written compatible with the $\alpha$ blocks, and we obtain \textit{partially reduced irreducible representation}~\cite{studzinski16}.
Then every label $i$ in Eq.~\eqref{eq:E} can be rewritten by means of a pair $i_\mu=(\alpha, i_\alpha$) -- every index $\alpha$ indicates in which irrep $\alpha$
we are, while the index $i_\alpha$
points us to the position in that irrep.  In the same way, the index $i_\mu$ denotes the position in irrep $\mu$ (the position in the matrix). Using this argumentation, for example, we can write operators $E_{ij}^{\mu}\equiv E_{i_\mu j_\mu}$ given by Eq.~\eqref{eq:E} for $\mu \vdash k$ in terms of the basis adapted to the subgroup $S_{k-1}$ as
\begin{align}
E_{ij}^{\mu}\equiv E_{i_\mu j_\mu}=E^{\alpha \beta}_{i_\alpha j_\beta}(\mu).    
\end{align}

\subsection{The algebra of partially transposed permutation operators}
Before we proceed, let us remind the notion of partial transposition. The \emph{partial transposition} $(\cdot)^{t_k}$ with respect to the $k-$th system
is defined as the linear extension of the ordinary matrix transposition $(\cdot)^t$ with respect to a given basis $\langle i|X^t |j\rangle := \bra{j}X \ket{i}$. Namely, for a bipartite system, the partial transposition with respect to the first subsystem $t_1$ transforms 
\begin{equation}
    t_1:\;|i \rangle\langle j|\otimes |k\rangle\langle l|\mapsto |j\rangle\langle i|\otimes |k\rangle\langle l|\,.
\end{equation}
Analogously, we define the partial transposition with respect to the second subsystem. 
The partial transposition relates the permutation operator $V_{(12)}$ to the Bell state $\ket{\phi^+}=\frac{1}{\sqrt{d}}\sum_{i=1}^d\ket{ii}$ through $V_{(12)}= d  \dyad{\phi^+}^{t_1}$. We can extend the definition of the partial transposition to the multi-partite scenario.

Having the definition of the group algebra $\mathbb{C}[S_k]$ and the partial transposition, we can naturally define the algebra of partially transposed operators with respect to the last subsystem:
\begin{align}
	\mathcal{A}_{k}^{t_{k}}(d):= \operatorname{span}_{\mathbb{C}}\{V^{t_{k}}_\pi:\pi \in S_k\}.
\end{align}
The elements of $\mathcal{A}_{k}^{t_{k}}(d)$ commute with the mixed action of the unitary group of the form $U^{\otimes (k-1)}\otimes \bar{U}$, where the bar denotes complex conjugation, and $U\in U(d)$.
The algebra $\mathcal{A}_{k}^{t_{k}}(d)$ has been extensively studied in the context of the port-based teleportation schemes and their effective quantum circuits, and semidefinite problems for covariant quantum channels, see for example in~\cite{studzinski20,studzinski16,mozrzymas18,Mozrzymas_2018JPA,fei2023,grinko2023gelfand,grinko2024,grinko2023}. Here we only summarize the facts from the point of view of the irreducible representation theory of $\mathcal{A}_{k}^{t_{k}}(d)$ which are necessary for this paper's consistency. For more details and proof, we refer the reader to the cited papers.  
The algebra $\mathcal{A}_{k}^{t_{k}}(d)$ is a direct sum of two ideals (see Proposition 27 in~\cite{mozrzymas2014})
\begin{equation}
\label{A_decomp}
\mathcal{A}_{k}^{t_{k}}(d)=\mathcal{M}\oplus \mathcal{S}=F \ \mathcal{A}_{k}^{t_{k}}(d) \  F\oplus
(id_{\mathcal{A}}-F)\mathcal{A}_{k}^{t_{k}}(d)(id_{\mathcal{A}}-F),
\end{equation}
where the idempotent $F=\sum_{\alpha \vdash k-2}\sum_{\mu=\alpha+\Box}F_{\mu }(\alpha )$ is the identity on the ideal $\mathcal{M}$, i.e., $F=id_{\mathcal{M}}$, and $id_{\mathcal{A}}$ is the identity operator on the whole space. The operators $F_{\mu }(\alpha )$ are projectors onto the irreps of $\mathcal{A}_{k}^{t_{k}}(d)$ contained in the ideal $\mathcal{M}$ and they satisfy the following rules (Theorem 1 in~\cite{studzinski16}):
\begin{align}
F_{\mu}(\alpha)F_{\nu}(\beta)=\delta_{\mu\nu}\delta_{\alpha \beta}F_{\mu}(\alpha),\quad \tr(F_{\mu}(\alpha))=m_{\alpha}d_{\mu}.
\end{align}
Due to Eq.~(145) in paper~\cite{mozrzymas18}, the explicit form of the projectors $F_{\mu }(\alpha )$ in the natural representation is given by
\begin{align}
\label{explicit}
F_{\mu}(\alpha)=\frac{1}{\gamma_{\mu}(\alpha)}P_{\mu}\sum_{a=1}^{k-1}V_{(a,k-1)}P_{\alpha}\otimes V_{(k-1,k)}^{t_k}V_{(a,k-1)}P_\mu\,,
\end{align}
where $P_{\alpha},P_{\mu}$ are the Young projectors onto irreducible spaces labelled by the Young diagrams $\alpha \vdash k-2$ and $\mu \vdash k-1$, respectively, defined in Eq.~\eqref{Yng_proj}, and $\gamma_{\mu}(\alpha)=(k-1)\frac{m_{\mu}d_{\alpha}}{m_{\alpha}d_{\mu}}$. The projectors $F_\mu(\alpha)$ can be understood as an analogue of the Young projectors $P_\mu$ in Eq.~\eqref{Yng_proj}. The full orthonormal irreducible basis of the ideal $\mathcal{M}$ - the analogue of the operators $E^{\mu}_{ij}$ in Eq.~\eqref{eq:E} - when we deal with a single partial transposition, is given by Theorem 11 in~\cite{studzinski20} as
\begin{equation}
F_{i_\mu j_\nu}(\alpha) = \frac{m_\alpha}{\sqrt{m_\mu m_\nu}}E_{i_\mu 1_\alpha}^{ \  \ \alpha}V_{(k-1,k)}^{t_k}E_{1_\alpha j_{\nu}}^{\alpha} \label{eqn:f_operator}.
\end{equation}
Here $1_\alpha$ represents an arbitrarily fixed label since the considered operator does not depend on its choice~\cite{studzinski20} and we have written one of the index the partially reduced notation - see discussion in Section~\ref{sec:SW}. Now, due to Definition 15 presented in~\cite{studzinski20}, the operators $F_{\mu}(\alpha)$ given by Eq.~\eqref{explicit} and operators $F_{i_\mu j_\nu}(\alpha)$ are related by 
\begin{align}
\label{eq:FvsFij}
    F_{\mu}(\alpha)=\sum_{i_\mu}
    F_{i_{\mu} i_{\mu}}(\alpha).
\end{align}
The ideal $\mathcal{S}$ from the direct sum in Eq.~\eqref{A_decomp} is generated by the following set of elements~\cite{mozrzymas2014}:
\begin{equation}
\{V_{\sigma}(\id_{\mathcal{A}}-F) \ : \ \sigma\in S_{k-1}\}.
\end{equation}
In fact,  in the ideal $\mathcal{S}$, only elements of $S_{k-1}$ are represented non-trivially, while elements from $S_k$ for $\sigma(k)\neq k$ are represented by the zero operator. Irreducible representations contained in $\mathcal{S}$ are the standard irreducible representations for $S_{k-1}$ and the Young frames of $k-1$ boxes label them. However, some of them also exist in the ideal $\mathcal{M}$ causing technical difficulties in practical calculations~\cite{mozrzymas2014,Mozrzymas_2018JPA}.  Elements from $S_k$ for $\sigma(k)\neq k$ are fully represented only on the ideal $\mathcal{M}$. 

\subsection{Auxiliary facts regarding the group algebra $\mathbb{C}[S_{k+1}]$ and the algebra $\mathcal{A}_{k}^{t_{k}}(d)$}

 \begin{lemma}
\label{L1}
For the operators $P^{sym}_{1\ldots k+1}$ and $P_\mu$ for $\mu \vdash k$, the following relation holds:
\begin{align}
    P^{sym}_{1\ldots k+1} P_\mu = \delta_{\mu,sym_k}P^{sym}_{1\ldots k+1}.
\end{align}
\end{lemma}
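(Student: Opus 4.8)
The plan is to reduce the identity to a single character sum by exploiting the absorption property of the symmetric projector. Conceptually, $P^{sym}_{1\ldots k+1}$ projects onto the trivial representation of $S_{k+1}$, and restricting that action to the subgroup $S_k\subset S_{k+1}$ permuting only the first $k$ systems (and fixing system $k+1$), the trivial representation of $S_{k+1}$ branches to the trivial representation of $S_k$, which is exactly the symmetric frame $sym_k$. Since $P_\mu$, understood as $P_\mu\otimes\id_{k+1}$, projects onto the isotypic component of type $\mu$ for this $S_k$-action, I expect it to annihilate $P^{sym}_{1\ldots k+1}$ unless $\mu=sym_k$, in which case it should act as the identity. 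The steps below turn this heuristic into a short calculation.

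First I would expand $P_\mu$ via its definition in Eq.~\eqref{Yng_proj}, regarding each $V_\sigma$ for $\sigma\in S_k$ as acting on the first $k$ systems through the embedding $S_k\hookrightarrow S_{k+1}$. Then I would invoke the defining invariance of the symmetric projector, namely $V_\tau P^{sym}_{1\ldots k+1}=P^{sym}_{1\ldots k+1}$ for every $\tau\in S_{k+1}$ and in particular for every $\sigma\in S_k$, so that multiplication on the left collapses each permutation operator:
\begin{align}
P^{sym}_{1\ldots k+1}P_\mu=\frac{d_\mu}{k!}\sum_{\sigma\in S_k}\chi^\mu(\sigma^{-1})\,P^{sym}_{1\ldots k+1}V_\sigma=\frac{d_\mu}{k!}\left(\sum_{\sigma\in S_k}\chi^\mu(\sigma^{-1})\right)P^{sym}_{1\ldots k+1}.
\end{align}

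The remaining work is to evaluate the scalar $\sum_{\sigma\in S_k}\chi^\mu(\sigma^{-1})$. Since $\sigma\mapsto\sigma^{-1}$ is a bijection of $S_k$, this equals $\sum_{\sigma\in S_k}\chi^\mu(\sigma)$, and by character orthogonality this is $k!$ times the multiplicity of the trivial representation inside the irrep $\mu$; that multiplicity is $\delta_{\mu,sym_k}$, giving $\sum_{\sigma\in S_k}\chi^\mu(\sigma^{-1})=k!\,\delta_{\mu,sym_k}$. Substituting and using $d_{sym_k}=1$ (from the list of closed expressions preceding the lemma) then yields
\begin{align}
P^{sym}_{1\ldots k+1}P_\mu=d_\mu\,\delta_{\mu,sym_k}\,P^{sym}_{1\ldots k+1}=\delta_{\mu,sym_k}\,P^{sym}_{1\ldots k+1},
\end{align}
which is the claim. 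I do not anticipate any genuine obstacle: the only points demanding care are the bookkeeping of the embedding $S_k\hookrightarrow S_{k+1}$, so that $P_\mu\otimes\id_{k+1}$ is indeed absorbed by $P^{sym}_{1\ldots k+1}$, and the identification of the trivial $S_k$-irrep with the symmetric frame $sym_k$, which fixes the placement of the Kronecker delta.
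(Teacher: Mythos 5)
Your proof is correct and follows essentially the same route as the paper: expand $P_\mu$ in its natural representation, absorb each $V_\sigma$ into $P^{sym}_{1\ldots k+1}$ via the projector's permutation invariance, and evaluate the character sum $\frac{1}{k!}\sum_{\sigma\in S_k}\chi^{\mu}(\sigma^{-1})=\delta_{\mu,sym_k}$ together with $d_{sym_k}=1$. Your added justification of the character sum via orthogonality (multiplicity of the trivial irrep in $\mu$) makes explicit a step the paper merely asserts, but the argument is the same.
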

\begin{proof}
The operator $P_\mu$ has the natural representation $P_\mu = \frac{d_\mu}{k!}\sum_{\sigma\in S_k} \chi^\mu(\sigma^{-1})V_{\sigma}$, where $\chi^\mu$ is the irreducible character of the representation $\mu$, and $V_{\sigma}$ is the canonical representation of the permutation $\sigma \in S_k$ on the space $(\mathbb{C}^d)^{\otimes k}$. Then we can write explicitly
\begin{align}
    P^{sym}_{1\ldots k+1} P_\mu &= P^{sym}_{1\ldots k+1} \frac{d_\mu}{k!}\sum_{\sigma\in S_k} \chi^\mu(\sigma^{-1})V_{\sigma} \label{line1} \\
    &= \frac{d_\mu}{k!}\sum_{\sigma\in S_k} \chi^\mu(\sigma^{-1})P^{sym}_{1\ldots k+1} \\
    &=\delta_{\mu,sym_k}P^{sym}_{1\ldots k+1},
\end{align}
where $\frac{1}{k!}\sum_{\sigma\in S_k} \chi^\mu(\sigma^{-1}) = 1 \text{ if } \mu = sym_k \text{, and 0 otherwise}$, and we also have $d_{sym_k}=1$. In Eq.~\eqref{line1}, we use the fact that the symmetric projector $P^{sym}_{1\ldots k+1}$ annihilates all permutation operators $V_{\sigma}$.
\end{proof}

Let us denote by $t_A$ transposition with respect to the system $A$ and consider the symmetric projectors $P^{sym}_{1\ldots k}$ and $P^{sym}_{1\ldots k,A}$. In the next lemma, we find decompositions of $P^{sym}_{1\ldots k} \otimes \id_A$ and $(P^{sym}_{1\ldots k,A})^{t_A}$ in the algebra $\mathcal{A}_{k}^{t_{k}}(d)$.

\begin{lemma}
\label{Lemma2}
Operators $(P^{sym}_{1\ldots k, A})^{t_A}$ and $P^{sym}_{1\ldots k}\otimes \id_A$ have the following decomposition in the algebra $\mathcal{A}_{k}^{t_{k}}(d)$:
\begin{align}
&(P^{sym}_{1,\ldots,k,A})^{t_A}=\frac{d+k}{k}F_{sym_{k}}(sym_{k-1})+\frac{1}{k+1}P^{sym_k}_{\mathcal{S}},\label{Lemmeq:1}\\
&P^{sym}_{1\ldots k}\otimes \id_A =F_{sym_{k}}(sym_{k-1})+P^{sym_k}_{\mathcal{S}}\label{Lemmeq:2}.
\end{align}
The projector $F_{sym_{k}}(sym_{k-1})$ projects on the irrep labelled by the pair $(sym_k,sym_{k-1})$ contained in the ideal $\mathcal{M}$, while the projector $P^{sym_k}_{\mathcal{S}}$ projects on copies of the irrep $sym_k$ contained in the ideal $\mathcal{S}$.
\end{lemma}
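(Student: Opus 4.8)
The plan is to regard both operators as elements of the algebra $\mathcal{A}_k^{t_k}(d)$ of Eq.~\eqref{A_decomp}, realised on the $k+1$ systems $1,\ldots,k,A$ with the transposition taken on $A$, and to split each one along the ideal decomposition $\mathcal{A}_k^{t_k}(d)=\mathcal{M}\oplus\mathcal{S}$. Both operators do lie in the algebra: $P^{sym}_{1\ldots k}\otimes\id_A=\frac{1}{k!}\sum_{\sigma\in S_k}V_\sigma$ is a combination of permutations that fix $A$, each equal to its own partial transpose on $A$, while $(P^{sym}_{1\ldots k,A})^{t_A}$ is, by definition of the partial transpose, a combination of partially transposed permutation operators. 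Because $F=\id_{\mathcal{M}}$ is the identity of a two-sided ideal it is central in $\mathcal{A}_k^{t_k}(d)$, so it commutes with both operators and all cross terms vanish: for either operator $X$ one gets the clean splitting $X=FX+(\id-F)X$ with $FX\in\mathcal{M}$ and $(\id-F)X\in\mathcal{S}$, and I would treat the two pieces separately.

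For Eq.~\eqref{Lemmeq:2} the $\mathcal{M}$-piece comes from the explicit form \eqref{explicit} of the projectors $F_\mu(\alpha)$: each $F_\mu(\alpha)$ ends on the right with the Young projector $P_\mu$, so right-multiplying by $P^{sym}_{1\ldots k}=P_{sym_k}$ and using the orthogonality $P_\mu P_{sym_k}=\delta_{\mu,sym_k}P_{sym_k}$ (the same mechanism as in Lemma~\ref{L1}) annihilates every block except $\mu=sym_k$; since $sym_k=\alpha+\Box$ forces $\alpha=sym_{k-1}$, this gives $F\,(P^{sym}_{1\ldots k}\otimes\id_A)=F_{sym_k}(sym_{k-1})$. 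The $\mathcal{S}$-piece follows from the fact that $\sigma\mapsto(\id-F)V_\sigma$ is an algebra homomorphism on $\mathbb{C}[S_k]$ (the $V_\sigma$ with $\sigma\in S_k$ commute with the central $F$), under which the symmetric idempotent $P^{sym}_{1\ldots k}$ is sent to the projector onto the copies of $sym_k$ inside $\mathcal{S}$, i.e.\ to $P^{sym_k}_{\mathcal{S}}$ by definition. Adding the two pieces proves Eq.~\eqref{Lemmeq:2}.

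For Eq.~\eqref{Lemmeq:1} I would first reduce $(P^{sym}_{1\ldots k,A})^{t_A}$ to these same ingredients via the symmetric-projector recursion
\[
P^{sym}_{1\ldots k,A}=\frac{1}{k+1}\,P^{sym}_{1\ldots k}\Big(\id+\sum_{a=1}^{k}V_{(a,A)}\Big)P^{sym}_{1\ldots k},
\]
which follows from the coset decomposition of $S_{k+1}$ over the stabiliser $S_k$ of $A$ together with idempotency of $P^{sym}_{1\ldots k}$. Taking the partial transpose on $A$ passes through the outer $P^{sym}_{1\ldots k}$ (trivial on $A$) and sends each $V_{(a,A)}$ to $V_{(a,A)}^{t_A}=d\,\ketbra{\phi_d^+}_{aA}$; permutation invariance of $P^{sym}_{1\ldots k}$ makes all $k$ summands equal, so
\[
(P^{sym}_{1\ldots k,A})^{t_A}=\frac{1}{k+1}\Big(P^{sym}_{1\ldots k}+dk\,G\Big),\qquad G:=P^{sym}_{1\ldots k}\,\ketbra{\phi_d^+}_{kA}\,P^{sym}_{1\ldots k}.
\]
The key remaining step is to identify $G$ with $F_{sym_k}(sym_{k-1})$: instantiating \eqref{explicit} at $(\mu,\alpha)=(sym_k,sym_{k-1})$, with $\gamma_{sym_k}(sym_{k-1})=k-1+d$, and collapsing the swaps through $P^{sym}_{1\ldots k}V_{(a,k)}=V_{(a,k)}P^{sym}_{1\ldots k}=P^{sym}_{1\ldots k}$ and $P^{sym}_{1\ldots k}\big(P^{sym}_{1\ldots k-1}\otimes\id_k\big)=P^{sym}_{1\ldots k}$, every summand collapses to $P^{sym}_{1\ldots k}V_{(k,A)}^{t_A}P^{sym}_{1\ldots k}=dG$, yielding $F_{sym_k}(sym_{k-1})=\tfrac{kd}{k-1+d}\,G$. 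Substituting this and Eq.~\eqref{Lemmeq:2} and collecting the $\mathcal{M}$- and $\mathcal{S}$-contributions gives Eq.~\eqref{Lemmeq:1}, with the numerical coefficients readily verified against $\tr\big[(P^{sym}_{1\ldots k,A})^{t_A}\big]=\binom{d+k}{k+1}$ and $\tr F_\mu(\alpha)=m_\alpha d_\mu$.

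The main obstacle is precisely the identification $G=\tfrac{k-1+d}{kd}F_{sym_k}(sym_{k-1})$. Two points need care: first, the index shift between the $k$-system conventions of \eqref{explicit} and the present $(k+1)$-system setting, which must be tracked so that the range of the sum and the constant $\gamma_{sym_k}(sym_{k-1})$ are instantiated correctly; and second, the observation that $G$ already lies in $\mathcal{M}$, so that $(\id-F)G=0$ and the $\mathcal{S}$-sector of Eq.~\eqref{Lemmeq:1} is inherited unchanged from that of Eq.~\eqref{Lemmeq:2} (which is what produces the clean factor $\tfrac{1}{k+1}$ there). Once $G\in\mathcal{M}$ and its proportionality to $F_{sym_k}(sym_{k-1})$ are secured, everything else is bookkeeping with central idempotents.
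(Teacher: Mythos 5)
Your proposal is correct, but it reaches the decomposition by a genuinely different route than the paper. The paper works ``from the outside in'': it computes the trace overlaps of $(P^{sym}_{1\ldots k,A})^{t_A}$ with the full irreducible operator basis $F_{i_\mu j_\nu}(\alpha)$ of Eq.~\eqref{eqn:f_operator}, shows (via Lemma~\ref{L1}, Corollary~10 of~\cite{studzinski20}, and the trace rules) that only the block $(\mu,\alpha)=(sym_k,sym_{k-1})$ survives on $\mathcal{M}$, then posits the ansatz $c_1F_{sym_k}(sym_{k-1})+c_2P^{sym_k}_{\mathcal{S}}$ and fixes $c_1,c_2$ by two further trace computations. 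You instead derive an exact operator identity: the coset recursion $P^{sym}_{1\ldots k,A}=\tfrac{1}{k+1}P^{sym}_{1\ldots k}\big(\id+\sum_a V_{(a,A)}\big)P^{sym}_{1\ldots k}$, partial transposition $V_{(a,A)}^{t_A}=d\ketbra{\phi_d^+}_{aA}$, and the collapse of the explicit projector formula (the $(k{+}1)$-system instance, Eq.~\eqref{eq:Fproj}, not literally Eq.~\eqref{explicit} --- you correctly flag the index shift) to $F_{sym_k}(sym_{k-1})=\tfrac{dk}{k-1+d}\,P^{sym}_{1\ldots k}\big(\id\otimes\ketbra{\phi_d^+}_{kA}\big)P^{sym}_{1\ldots k}$. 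This is more elementary and self-contained: it bypasses the $F_{i_\mu j_\nu}(\alpha)$ basis entirely, and as a byproduct it re-derives the key identity that the paper proves separately in Lemma~\ref{lemma:eigendecomposition} and reuses in the proof of Thm.~\ref{thm:1}. What the paper's overlap method buys in exchange is generality: it identifies which irreps survive without needing a closed form for the middle operator, which is the template that scales to less symmetric variants of the problem. Your supporting steps --- centrality of $F$ (so cross terms vanish), $F\,(P^{sym}_{1\ldots k}\otimes\id_A)=F_{sym_k}(sym_{k-1})$ via $F_\mu(\alpha)P_\nu=\delta_{\mu\nu}F_\mu(\alpha)$ and $sym_k=\alpha+\Box\Rightarrow\alpha=sym_{k-1}$, and the identification of the $\mathcal{S}$-component with $P^{sym_k}_{\mathcal{S}}$ --- are all sound and match facts the paper quotes.

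One point you should make explicit: your derivation does \emph{not} reproduce Eq.~\eqref{Lemmeq:1} as printed. Substituting $dk\,G=(k-1+d)F_{sym_k}(sym_{k-1})$ and Eq.~\eqref{Lemmeq:2} into $\tfrac{1}{k+1}\big(P^{sym}_{1\ldots k}\otimes\id_A+dk\,G\big)$ gives the coefficient $\tfrac{d+k}{k+1}$, not $\tfrac{d+k}{k}$. This is in fact a typo in the lemma statement: the paper's own proof obtains $c_1=m_{sym_{k+1}}/m_{sym_k}=\tfrac{d+k}{k+1}$, and it is $\tfrac{d+k}{k+1}$ that is used downstream in the proof of Thm.~\ref{thm:1}. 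Your trace check confirms this, since
\begin{align}
\frac{d+k}{k+1}\,m_{sym_{k-1}}+\frac{1}{k+1}\Big(d\,m_{sym_k}-m_{sym_{k-1}}\Big)=\frac{d+k}{k+1}\,m_{sym_k}=m_{sym_{k+1}}=\binom{d+k}{k+1},
\end{align}
whereas the printed coefficient $\tfrac{d+k}{k}$ would violate $\tr\big[(P^{sym}_{1\ldots k,A})^{t_A}\big]=m_{sym_{k+1}}$. So rather than claiming agreement with Eq.~\eqref{Lemmeq:1}, state the corrected coefficient; your argument then stands as a complete and independent proof of the (corrected) lemma.
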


\begin{proof}
In the first step, we calculate the overlap of the operator $(P^{sym}_{1,\ldots,k,A})^{t_A}$ with the basis given in Eq.~\eqref{eqn:f_operator}
\begin{align}
 \frac{m_\alpha}{\sqrt{m_{\mu}m_{\nu}}} \tr \Bigl[  (P^{sym}_{1,\ldots,k,A})^{t_A}E_{i_\mu 1_\alpha}^{ \  \ \alpha} V^{t_A}_{(k,A)}E_{1_\alpha i_\nu}^{\alpha} \Bigr] &= \frac{m_\alpha}{\sqrt{m_{\mu}m_{\nu}}} \tr \Bigl[  P^{sym}_{1,\ldots,k,A}E_{i_\mu 1_\alpha}^{ \  \ \alpha} V_{(k,A)}E_{1_\alpha i_\nu}^{\alpha} \Bigr]\label{eq:1}\\
 &= \frac{m_\alpha}{\sqrt{m_{\mu}m_{\nu}}}\tr\Bigl[P^{sym}_{1,\ldots,k,A}V_{(k,A)} E_{1_\alpha i_\nu}^{\alpha}E_{i_\mu 1_\alpha}^{ \  \ \alpha}\Bigr]\label{eq:2}\\
 &=\delta_{\mu\nu}\delta_{i_{\mu}i_{\nu}}\frac{m_{\alpha}}{m_{\mu}}\tr\Bigl[P^{sym}_{1,\ldots,k,A}E^{\alpha \alpha}_{1_\alpha 1_\alpha}(\mu)\Bigr]\label{eq:3}\\
 &=\delta_{\mu\nu}\delta_{i_{\mu}i_{\nu}}\frac{m_{\alpha}}{m_{\mu}d_{\alpha}}\tr\Bigl[P^{sym}_{1,\ldots,k,A}\sum_{d_{\alpha}}E^{\alpha \alpha}_{i_\alpha i_\alpha}(\mu)\Bigr]\label{eq:4}\\
  &=\delta_{\mu\nu}\delta_{i_{\mu}i_{\nu}}\frac{m_{\alpha}}{m_{\mu}d_{\alpha}}\tr\Bigl[P^{sym}_{1,\ldots,k,A}P_\alpha P_\mu\Bigr]\label{eq:5}.
\end{align}
For the equality in line \eqref{eq:1}, we use the property of $\tr(X^{t_A}F_{_{i_\mu j_\nu}}(\alpha)) =\tr(XF_{_{i_\mu j_\nu}}^{t_A}(\alpha))$ and the fact that the operators $E_{i_\mu 1_\alpha}^{ \  \ \alpha}$ and $E_{1_\alpha i_\nu}^{\alpha}$ act trivially on the system $A$. In lines~\eqref{eq:2} and~\eqref{eq:3}, we use the fact that for every $\sigma \in S_{l}$ for $l\leq k+1$, we have $V_{\sigma}P_{1,\ldots,k,A}^{sym}=P_{1,\ldots,k,A}^{sym}V_{\sigma}=P_{1,\ldots,k,A}^{sym}$. Next, in line~\eqref{eq:4}, we use the fact that the trace is constant for every index $1\leq i_\alpha \leq d_{\alpha}$. Finally, in line~\eqref{eq:5}, we use the decomposition of the Young projectors in terms of the operator basis given in Eq.~\eqref{eq:connectionEvP}, and the fact we take only a single copy of the irrep $\alpha$ that is contained in $\mu$. Now, we continue calculations, taking into account that the Young projectors $P_\alpha$ and $P_\mu$ do not act on the system $A$:
\begin{align}
&\frac{m_\alpha}{\sqrt{m_{\mu}m_{\nu}}} \tr \Bigl[  (P^{sym}_{1,\ldots,k,A})^{t_A}E_{i_\mu 1_\alpha}^{ \  \ \alpha} V^{t_A}_{(k,A)}E_{1_\alpha i_\nu}^{\alpha} \Bigr] =\delta_{\mu\nu}\delta_{i_{\mu}i_{\nu}}\frac{m_{\alpha}}{m_{\mu}d_{\alpha}}\tr\Bigl[P^{sym}_{1,\ldots,k,A}P_\alpha P_\mu\Bigr]\\
&=\delta_{\mu\nu}\delta_{i_{\mu}i_{\nu}}\frac{m_{\alpha}}{m_{\mu}d_{\alpha}}\tr\Bigl[P_\alpha P_\mu\tr_A(P^{sym}_{1,\ldots,k,A})\Bigr]\label{eq2:2}\\
&=\delta_{\mu\nu}\delta_{i_{\mu}i_{\nu}}\frac{m_{\alpha}}{m_{\mu}d_{\alpha}}\frac{m_{sym_{k+1}}}{m_{sym_{k}}}\tr\Bigl[P_\alpha P_\mu P^{sym}_{1,\ldots,k}\Bigr]\label{eq2:3}\\
&=\delta_{sym_{k-1}\alpha}\delta_{sym_{k}\mu}\delta_{\mu\nu}\delta_{i_{\mu}i_{\nu}}\frac{m_{sym_{k-1}}}{m_{sym_k}}\frac{m_{sym_{k+1}}}{m_{sym_{k}}}\tr\Bigl[P^{sym}_{1,\ldots,k-1}P^{sym}_{1,\ldots,k}\Bigr]\label{eq2:4}\\
&=\delta_{sym_{k-1}\alpha}\delta_{sym_{k}\mu}\delta_{\mu\nu}\delta_{i_{\mu}i_{\nu}}\frac{m_{sym_{k-1}}}{m_{sym_k}}\frac{m_{sym_{k+1}}}{m_{sym_{k}}}\tr\Bigl[P^{sym}_{1,\ldots,k}\Bigr]\label{eq2:44}\\
&=\delta_{sym_{k-1}\alpha}\delta_{sym_{k}\mu}\delta_{\mu\nu}\delta_{i_{\mu}i_{\nu}}\frac{m_{sym_{k+1}}m_{sym_{k-1}}}{m_{sym_{k}}}\label{eq2:7}.
\end{align}
In line~\eqref{eq2:2}, we exploit Corollary 10 from~\cite{studzinski20}. In line~\eqref{eq2:3}, we use the orthogonality relations $P_{\mu}P_{\nu}=\delta_{\mu \nu}P_{\mu}$ and $d_{sym_{k-1}}=1$, together with the fact that we have $\alpha\in \mu \vdash k$ from the construction, so since $\mu=sym_k$, it must be $\alpha=sym_{k-1}$. In line~\eqref{eq2:4}, we apply Lemma~\ref{L1}, and then in line~\eqref{eq2:44}, the trace rule $\tr(P_{\mu})=m_{\mu}d_{\mu}$ is applied. The final result contained in line~\eqref{eq2:7} shows that the considered trace is non-zero only when $\mu=sym_k$ and $\alpha=sym_{k-1}$ are satisfied, i.e. for $F_{sym_{k}}(sym_{k-1})$, since the sum in~Eq.~\eqref{eq:FvsFij} is trivial in this case. Now we must find the rest of the operator $(P^{sym}_{1,\ldots,k,A})^{t_A}$ which is represented on the ideal $\mathcal{S}$. This operator must be written in the basis of the algebra $\mathcal{A}_{k}^{t_{k}}(d)$. However, we know that on the ideal $\mathcal{M}$, only $F_{sym_{k}}(sym_{k-1})$ survives, so we can assume the following decomposition:
\begin{align}
\label{eq:1kA}
(P^{sym}_{1,\ldots,k,A})^{t_A}=c_1F_{sym_{k}}(sym_{k-1})+c_2P^{sym_k}_{\mathcal{S}},
\end{align}
where  $c_1,c_2\in \mathbb{R}$, and $P^{sym_k}_{\mathcal{S}}=\id_{\mathcal{A}}-F_{sym_{k}}(sym_{k-1})$, since we had to subtract irreps $sym_k$ which are represented on $\mathcal{M}$. In an analogous way, we can determine the following decomposition of the operator $P^{sym}_{1\ldots k}\otimes \id_A$ in the algebra $\mathcal{A}_{k}^{t_{k}}(d)$:
\begin{align}
\label{eq:1k}
P^{sym}_{1\ldots k}\otimes \id_A=F_{sym_{k}}(sym_{k-1})+P^{sym_k}_{\mathcal{S}}.
\end{align}
We have a trivial coefficient in the above decomposition since the left-hand side is a projector, and both $F_{sym_{k}}(sym_{k-1})$ and $P^{sym_k}_{\mathcal{S}}$ are supported on orthogonal spaces, so $\tr(F_{sym_{k}}(sym_{k-1}) P^{sym_k}_{\mathcal{S}})=0$ has to hold. This already proves the expression given by Eq.~\eqref{Lemmeq:2}, since as we will show later on, we are not interested in the explicit form of the projector $P^{sym_k}_{\mathcal{S}}$. 

Now we are in the position to determine the unknown coefficients in Eq.~\eqref{eq:1kA}. Let us calculate the overlap of Eq.~\eqref{eq:1kA} with $F_{sym_k}(sym_{k-1})$, so that
 \begin{align}
     \tr\big((P^{sym}_{1,\ldots,k,A})^{t_A} F_{sym_k}(sym_{k-1}) \big) = c_1 m_{sym_{k-1}},
 \end{align}
 where from previous calculations, i.e. from Eq.~\eqref{eq2:7}, we know that the left-hand side is equal to $\frac{m_{sym_{k+1}}m_{sym_{k-1}}}{m_{sym_{k}}}$. This gives the value of the coefficient $c_1$ as
 \begin{align}
 \label{eq:c1}
c_1=\frac{m_{sym_{k+1}}}{m_{sym_{k}}}=\frac{d+k}{k+1}.
 \end{align}
To evaluate the coefficient $c_2$, let us combine the expression given by Eq.~\eqref{eq:1kA} with Eq.~\eqref{eq:1k} to remove the projector $P^{sym_k}_{\mathcal{S}}$ as
\begin{align}
(P^{sym}_{1,\ldots,k,A})^{t_A}=\frac{m_{sym_{k+1}}}{m_{sym_{k}}}F_{sym_k}(sym_{k-1})+c_2\left(P^{sym}_{1\ldots k}\otimes \id_A-F_{sym_k}(sym_{k-1})\right).
\end{align}
Evaluating the trace from both sides and taking into account that $\tr(F_{sym_k}(sym_{k-1}))=m_{sym_{k-1}}$ (see Theorem 1 in~\cite{studzinski16}), we obtain
\begin{align}
m_{sym_{k+1}}=\frac{m_{sym_{k+1}}m_{sym_{k-1}}}{m_{sym_{k}}}+dc_2m_{sym_k}-c_2m_{sym_{k-1}}.
\end{align}
This finally gives
\begin{align}
\label{eq:c2}
c_2=\frac{m_{sym_{k+1}}}{m_{sym_k}}\frac{\left(1-\frac{m_{sym_{k-1}}}{m_{sym_k}}\right)}{\left(d-\frac{m_{sym_{k-1}}}{m_{sym_k}}\right)}=\frac{d+k}{k+1}\left(\frac{1-\frac{d+k-1}{k}}{d-\frac{d+k-1}{k}}\right)=\frac{1}{k+1}.
\end{align}
In Eqs.~\eqref{eq:c1} and \eqref{eq:c2}, we have explicit formulas for the coefficients $c_1,c_2$, so we proved the decomposition given by Eq.~\eqref{Lemmeq:1}. This finishes the proof.
\end{proof}

\subsection{The proof of the main theorem}
Now we are in a position to formulate and prove the main result of this section.
\MainThm* 
\begin{proof}
The proof consists of two main steps. In the first step, we show that the problem of finding the optimal success probability $p(d,k)$ can be recast as an SDP problem. In the second step, by exploiting the internal symmetries of the problem, we show the achievability of Eq.~\eqref{eq:achev} by explicitly constructing a feasible solution to SDP, i.e. postulating a particular form of the measurements given by Eq.~\eqref{eq:achev2}.

From the general consideration presented in Sec.~\ref{Section2}, the probabilistic teleportation condition can be written as
\begin{align}
	\tr_{1\ldots k,A}\Big(\ketbra{\psi}{\psi}^{\otimes k}_{12\ldots k} \otimes \ketbra{\phi^+_d}{\phi^+_d}_{AB} \;  M_{1\ldots kA}\otimes \id_{B}\Big)= p(d,k) \ketbra{\psi}{\psi}_B.
\end{align}
This condition is equivalent to the following two relations:
\begin{align}
    \tr\Big(\ketbra{\psi}{\psi}^{\otimes k}_{12\ldots k} \otimes \ketbra{\phi^+_d}{\phi^+_d}_{AB} \; M_{1\ldots kA}\otimes \id_{B} \Big)&= p(d,k), \label{eq:LHS1}\\
    \bra{\psi}\tr_{1\cdots k,A}\Big(\ketbra{\psi}{\psi}^{\otimes k}_{12\ldots k} \otimes \ketbra{\phi^+_d}{\phi^+_d}_{AB} \; M_{1\ldots kA}\otimes \id_{B}\Big)\ket{\psi}&= p(d,k).\label{eq:LHS2}
\end{align}
Before proceeding, let us notice that the operator $M_{1\ldots kA}$ may be assumed to respect several symmetries. In particular, the following commutation relations:
\begin{align}
&	[M_{1\ldots kA}, U^{\otimes k}\otimes \overline{U}]=0, \quad \forall U\in\SU(d)  \label{r1}\\
&	[M_{1\ldots kA},  V_\sigma \otimes \id_A]=0, \quad \forall \sigma\in S_k \label{r2},
\end{align}
must hold, where $V_\sigma$ is the canonical representation of the permutation $\sigma \in S_k$ on the space $(\mathbb{C}^d)^{\otimes k}$ {defined in Eq.~\eqref{eq:actionS_k}}, the identity operator $\id_A$ acts on the system $A$, and the bar denotes complex conjugation.
From the unitary group symmetry in Eq.~\eqref{r1}, we can show that the left-hand sides of Eqs.~\eqref{A_decomp} and \eqref{eq:LHS2} do not depend on the input state $\ket{\psi}$.  Therefore, we can rewrite the above two conditions as
\begin{align}
    p(d,k)&= \int \mathrm{d} \psi \tr\Big(\ketbra{\psi}{\psi}^{\otimes k}_{12\ldots k} \otimes \ketbra{\phi^+_d}{\phi^+_d}_{AB} \;  M_{1\ldots kA}\otimes \id_{B}\Big),\label{c1}\\
    p(d,k)&= \int \mathrm{d} \psi \bra{\psi}\tr_{1\cdots kA}\Big(\ketbra{\psi}{\psi}^{\otimes k}_{12\ldots k} \otimes \ketbra{\phi^+_d}{\phi^+_d}_{AB} \; M_{1\ldots kA}\otimes \id_{B}\Big)\ket{\psi} \label{eqn:telep_1}{,}
\end{align}
where $\dd \psi$ is the uniform distribution given by $\ket{\psi}=U\ket{\psi_0}$ for a Haar-random unitary and a fixed state $\ket{\psi_0}$.
The first condition~\eqref{c1} can be further evaluated as
\begin{align}
    p(d,k)
    &= \int \mathrm{d} \psi \tr\Big(\ketbra{\psi}{\psi}^{\otimes k}_{12\ldots k}\otimes \ketbra{\phi^+_d}{\phi^+_d}_{AB} \;  M_{1\ldots kA}\otimes \id_{B}\Big)\\
    &= \frac{1}{d} \int \mathrm{d} \psi \tr\Big(\ketbra{\psi}{\psi}^{\otimes k}_{12\ldots k}\otimes \id_{A} \; M_{1\ldots kA}\Big)\\
    &= \frac{1}{d} \tr\Big(\frac{P^{sym}_{1\ldots k}}{m_{sym_k}}\otimes \id_{A} \; M_{1\ldots kA}\Big), \label{eq:nice_formula}
\end{align}
where $m_{sym_k}$ is the multiplicity of the projector $P^{sym}_{1\ldots k}$.
The second condition~\eqref{eqn:telep_1} can be rewritten as
\begin{align}
    p(d,k)
    &= \int \mathrm{d} \psi \bra{\psi}\tr_{1\cdots kA}\Big(\ketbra{\psi}{\psi}^{\otimes k}_{12\ldots k}\otimes \ketbra{\phi^+_d}{\phi^+_d}_{AB} \; M_{1\ldots kA}\otimes \id_{B}\Big)\ket{\psi}\label{eq:max}\\
    &= \int \mathrm{d} \psi \tr \Big(\ketbra{\phi^+_d}{\phi^+_d}_{AB}(\id_{A} \otimes \ketbra{\psi}{\psi}_{B}) \otimes \ketbra{\psi}{\psi}^{\otimes k}_{12\ldots k} \;  M_{1\ldots kA}\otimes \id_{B}\Big)\\
    &= \int \mathrm{d} \psi \tr \Big(\ketbra{\phi^+_d}{\phi^+_d}_{AB} (\ketbra{\psi}{\psi}^t_{A} \otimes \id_{B}  ) \otimes \ketbra{\psi}{\psi}^{\otimes k}_{12\ldots k} \;  M_{1\ldots kA}\otimes \id_{B}\Big)\\
    &= \frac{1}{d}\int \mathrm{d} \psi \tr \Big(\ketbra{\psi}{\psi}^{\otimes k}_{12\ldots k}\otimes \ketbra{\psi}{\psi}^t_{A} \; M_{1\ldots kA}\Big)\\
    &= \frac{1}{d} \tr \Big[\frac{(P^{sym}_{1\ldots k,A})^{t_A}}{m_{sym_k}} \; M_{1\ldots kA}\Big],\label{eqn:telep_2}
\end{align}
where $t_A$ is the partial transpose with respect to the system $A$ and $m_{sym_{k+1}}$ is the multiplicity of the symmetric projector $P^{sym}_{1\ldots k,A}$ acting on systems $1,\ldots, k,A$. Now, collecting our considerations, we see that the problem of finding the optimal probability of success $p(d,k)$ can be recast as an SDP problem:
\begin{align}
	p(d,k)=\max_{M_{1\ldots kA}\in(\mathbb{C}^d)^{\otimes (k+1)}} & \frac{1}{d} \tr\Big(\frac{P^{sym}_{1\ldots k}}{m_{sym_k}} \otimes \id_{A} \; M_{1\ldots kA}\Big), \label{eq:opt1} \\
	\text{such that: }&\\
&\tr\Big(\frac{P^{sym}_{1\ldots k}}{m_{sym_k}}\otimes \id_{A} \; M_{1\ldots kA}\Big)=\tr \Big[\frac{(P^{sym}_{1\ldots k,A})^{t_A}}{m_{sym_{k+1}}} \; M_{1\ldots kA}\Big],\label{r01} \\
&[M_{1\ldots kA}, U^{\otimes k}\otimes \overline{U}]=0, \quad \forall U\in\SU(d)  \label{r11}\\
&[M_{1\ldots kA}, V_\pi \otimes \id_A]=0, \quad \forall \pi\in S_k \label{r21},\\
&0\leq M_{1\ldots kA}\leq \id. \label{eq:opt4} 
\end{align}
The maximisation in line~\eqref{eq:opt1} is over all operators acting on the space $(\mathbb{C}^d)^{\otimes (k+1)}$ with additional constraints shown in lines~\eqref{r01}-\eqref{eq:opt4}. 
Now we are going to solve the above optimisation problem using symmetries appearing in its formulation. The commutation relation given by Eq.~\eqref{r01} implies that the measurement $M_{1\ldots kA}$ belongs to the algebra of the partially transposed permutations $\mathcal{A}_{k+1}^{t_A}(d)$ where the transposition $t_A$ is applied with respect to the $A^{\text{th}}$ system. The second commutation relation given by Eq.~\eqref{r11} implies that the operator $M_{1\ldots kA}$ must be constant on the irreps of $S_k$. Due to our discussion,  we can assume that $M_{1\ldots kA}$ is a linear combination of projectors $F_\mu(\alpha)$ living on the ideal $\mathcal{M}$ and projectors on irreps $\mu\vdash k$ contained in the ideal $\mathcal{S}$:
\begin{equation}
\label{eqn:decomp1}
M_{1\ldots kA}=\sum_{\alpha}\sum_{\mu \in \alpha}a_{\alpha}(\mu)F_\mu(\alpha)+\sum_{\mu}a_{\mu}P^{\mu}_{\mathcal{S}},\qquad \forall \alpha \vdash k-1, \forall \mu \vdash k \quad a_{\alpha}(\mu),a_\mu\geq 0.
\end{equation}
The projectors $F_{\mu}(\alpha)$ for $\mu\in\alpha$ project on the irreducible components of the algebra $\mathcal{A}_{k+1}^{t_A}(d)$, and they have been studied in the context of the port-based teleportation~\cite{studzinski16,mozrzymas18}. Moreover, from Theorem 1 in~\cite{studzinski16}, we have $[F_{\mu}(\alpha),P_{\nu}]=0$ for $\nu\vdash k$, so indeed the measurement is constant on the irreps of $S_k$.  We can simplify the form of the measurement represented by Eq.~\eqref{eqn:decomp1} more. Namely, we know that {the measurement} must satisfy the relation given by Eq.~\eqref{r01}. On the other hand, thanks to Lemma~\ref{Lemma2}, we know the decomposition of the operators $P^{sym}_{1\ldots k}\otimes \id_{A}$, $(P^{sym}_{1\ldots k,A})^{t_A}$ in the algebra $\mathcal{A}_{k+1}^{t_A}(d)$. Combining these facts, we can write $M_{1\ldots kA}$ given by Eq.~\eqref{eqn:decomp1} without loss of generality as
 \begin{align}
 \label{eq:Mreduced}
     M_{1,\ldots,k,A} = a_{sym_k}(sym_{k-1})F_{sym_k}(sym_{k-1})+a_{sym_k}P_{\mathcal{S}}^{sym},
 \end{align}
for $0\leq a_{sym_k}(sym_{k-1}),a_{sym_k}\leq 1$. Then, starting from the right-hand side, we can rewrite Eq.~\eqref{r01} in terms of the result of Lemma~\ref{Lemma2} and Eq.~\eqref{eq:Mreduced} as
 \begin{align}
     &\frac{1}{m_{sym_{k+1}}}\tr\bigl[(P^{sym}_{1,\ldots,k,A})^{t_A} M_{1,\ldots,k,A} \bigr] =\\
     &=\frac{1}{m_{sym_{k+1}}}\tr\bigl[\big(\frac{d+k}{k+1}F_{sym_k}(sym_{k-1})+\frac{1}{k+1}P_{\mathcal{S}}^{sym}\big) \big(a_{sym_k}(sym_{k-1})F_{sym_k}(sym_{k-1})+a_{sym_k}P_{\mathcal{S}}^{sym}\big)\bigr]\\
     &= \frac{1}{m_{sym_{k+1}}} \Big(\frac{d+k}{k+1}a_{sym_k}(sym_{k-1})m_{sym_{k-1}}+\frac{a_{sym_k}}{k+1}\tr (P_{\mathcal{S}}^{sym})\Big)\\
     &= \frac{k}{k-1+d}a_{sym_k}(sym_{k-1})+\frac{1}{k+1}\frac{a_{sym_k}}{m_{sym_{k+1}}}\tr (P_{\mathcal{S}}^{sym}), \label{eqn:pright}
 \end{align}
since $\tr(F_{sym_k}(sym_{k-1})) = m_{sym_{k-1}}$. Now, applying Eq.~\eqref{eq:Mreduced} and Lemma~\ref{Lemma2} to the left-hand side of Eq.~\eqref{r01}, we obtain
\begin{align}
    &\frac{1}{m_{sym_{k}}}\tr\bigl[P^{sym}_{1,\ldots,k}\otimes \id_A M_{1,\ldots,k,A} \bigr]=\\
    &=\frac{1}{m_{sym_{k}}}\tr\bigl[\big(F_{sym_k}(sym_{k-1}) + P_{\mathcal{S}}^{sym} \big) \big(a_{sym_k}(sym_{k-1})F_{sym_k}(sym_{k-1})+a_{sym_k}P_{\mathcal{S}}^{sym}\big) \bigr]\\
    &= \frac{1}{m_{sym_{k}}}\Big(a_{sym_k}(sym_{k-1}) m_{sym_{k-1}} + a_{sym_k} \tr( P_{\mathcal{S}}^{sym})\Big)\\
    &= a_{sym_k}(sym_{k-1})\frac{m_{sym_{k-1}}}{m_{sym_{k}}}   +  \frac{a_{sym_k}}{m_{sym_{k}}}\tr( P_{\mathcal{S}}^{sym})\\
    &=\frac{k}{k-1+d}a_{sym_k}(sym_{k-1})+\frac{a_{sym_k}}{m_{sym_{k}}}\tr( P_{\mathcal{S}}^{sym}).\label{eqn:pleft}
\end{align}
 Comparing lines \eqref{eqn:pright} and \eqref{eqn:pleft}, we see that $a_2=0$ so that $a_1$ must be equal to $1$ because of Eq.~\eqref{eq:max}, thus the form of the measurement is 
 \begin{align}
     M_{1,\ldots,k,A} = F_{sym_k}(sym_{k-1}).
 \end{align}
First, let us recall the explicit form of the projectors $F_\mu(\alpha)$ from Eq.~(145) in paper~\cite{mozrzymas18}. Using our indexation of systems adopted for this paper, the projectors take the following form:
\begin{align}
\label{eq:Fproj}
    F_\mu(\alpha) = \frac{1}{\gamma_\mu(\alpha)} P_\mu \sum_{a=1}^k V_{(k,a)} P_\alpha \otimes V^{t_A}_{(k,A)} V_{(k,a)}P_\mu,\qquad  \gamma_\mu(\alpha) = k\frac{m_\mu d_\alpha}{m_\alpha d_\mu},
\end{align}
where $t_A$ denotes the partial transposition with respect to the $A^{\text{th}}$ subsystem (Alice), $V_{(k,a)}$ is a permutation between systems $k$ and $a\in\{1,2,\ldots,k\}$, and $V_{(k,A)}$ is a permutation between system $k$ and $A$. The Young projector $P_{\alpha}$ acts on $k$ systems except $a$, however, for simplification, we use a single symbol to denote all of them. In our case $\mu = sym_k$, $\alpha = sym_{k-1}$ and $d_{sym_{k-1}} = d_{sym_k} = 1$ thus $\gamma_{sym_k}(sym_{k-1}) = k\frac{m_{sym_k}}{m_{sym_{k-1}}}=\frac{1}{k-1+d}$. 

In the second part of the proof, we are going to prove the explicit form of the measurement given by Eq.~\eqref{eq:achev2}. Up to now, we have shown that Alice's measurement is of the form $M_{1\ldots kA}=F_{sym_k}(sym_{k-1})$. Thus, we obtain
\begin{align}
M_{1\ldots kA}&=\frac{1}{k-1+d}\sum_{a=1}^k V_{(k,a)}P^{sym}_{1\ldots k}(P^{sym}_{1\ldots k-1}\otimes V^{t_A}_{(k,A)}) V_{(k,a)}P^{sym}_{1\ldots k}\\
&=\frac{dk}{k-1+d}P^{sym}_{1\ldots k} (P^{sym}_{1\ldots k-1}\otimes P^+_{kA})P^{sym}_{1\ldots k}\label{eqq:mes_Simple}\\
&=\frac{dk}{k-1+d}P^{sym}_{1\ldots k} (\id_{1\ldots(k-1)}\otimes P^+_{kA})P^{sym}_{1\ldots k}, \label{eqq:mes_simple}\\
&=\frac{dk}{(k-1+d)}\Big(P^{sym}_{1\ldots k}\otimes \id_A\Big)\Big(\id_{1\ldots(k-1)}\otimes \ketbra{\phi_d^+}_{kA} \Big)\Big(P^{sym}_{1\ldots k}\otimes \id_A\Big),\label{eqq:mes_simplE}
\end{align}
where we used the fact that for every $1\leq a\leq k $ we have $V_{(ka)}P^{sym}_{1\ldots k}=P^{sym}_{1\ldots k}V_{(ka)}=P^{sym}_{1\ldots k}$, relation $V^{t_A}_{(k,A)}=dP^+_{kA}=d\ketbra{\phi_d^+}_{kA}$, and finally in line~\eqref{eqq:mes_Simple}, we exploited Lemma~\ref{L1}.

In the last step of the proof, we calculate the resulting success probability, knowing the fact that maximum in Eq.~\eqref{eq:opt1} is achieved by measurement of the form given in Eq.~\eqref{eqq:mes_simplE}. We have
\begin{align}
    p(d,k)&= \frac{1}{d} \tr\Big(\frac{P^{sym}_{1\ldots k}}{m_{sym_k}} \otimes \id_{A} \; M_{1\ldots kA}\Big)=\frac{dk}{k-1+d}\frac{1}{m_{sym_k}} \tr\Big(P^{sym}_{1\ldots k}\otimes \id_{A} \;  P^+_{kA}\Big)\\
   &=\frac{dk}{k-1+d}\frac{1}{m_{sym_k}} \tr\Big(P^{sym}_{1\ldots k} \;  \tr_A(P^+_{kA})\Big)=\frac{k}{d(k-1+d)}\frac{1}{m_{sym_k}} \tr\Big(P^{sym}_{1\ldots k}\Big)\\
    &=\frac{k}{d(k-1+d)},
\end{align}
since $\tr_A(P^+_{kA})=\id_a/d$, and $\tr(P^{sym}_{1\ldots k})=m_{sym_k}$.
Finally, we obtain the expression given by Eq.~\eqref{eq:achev}, thus the proof is completed.
\end{proof}

\subsection{Proof of Lemma~\ref{lemma:eigendecomposition} and the eigendecomposition of $M$}
Now we re-state Lemma~\ref{lemma:eigendecomposition} and present its proof.
\EigenDecomposition* 
\begin{proof}
Let us rewrite the expression given in line~\eqref{eq:Fproj} for $\mu=sym_k$ and $\alpha=sym_{k-1}$ as
\begin{align}
\label{eq:mes_simple}
M_{1\ldots kA}=F_{sym_k}(sym_{k-1})&=\frac{1}{k-1+d}P^{sym}_{1\ldots k}\sum_{a=1}^k V_{(k,a)} P^{sym}_{1\ldots k-1} \otimes V^{t_A}_{(k,A)} V_{(k,a)}P^{sym}_{1\ldots k}\\
&=\frac{dk}{k-1+d}P^{sym}_{1\ldots k}(P^{sym}_{1\ldots k-1} \otimes P^+_{kA}) P^{sym}_{1\ldots k},\label{eq:mes_simplea3}
\end{align}
where we used the fact that for every $1\leq a\leq k $, we have $V_{(ka)}P^{sym}_{1\ldots k}=P^{sym}_{1\ldots k}V_{(ka)}=P^{sym}_{1\ldots k}$ and the relation $V^{t_A}_{(k,A)}=dP^+_{kA}$.
Using the definition of the Young projectors $P^{sym}_{1\ldots k}$ given by Eq.~\eqref{Yng_proj_sym} and the fact that we can generate all elements $\sigma \in S_k$ using only elements of the left coset $\Sigma_k:=\{V_{(ak)} \ |  1\leq a\leq k\}$ and elements from the group $S_{k-1}$ by writing $V_\sigma=V_{(ak)}V_\pi$ for some $\pi \in S_{k-1}$ and some $1\leq a\leq k$~\footnote{The same decomposition holds also for multiplication from the right-hand side, i.e. we have $V_\sigma=V_{\pi'}V_{(ak)}$ for some $\pi'\in S_{k-1}$ and some $1\leq a\leq k$.}, we can continue rewriting Eq.~\eqref{eq:mes_simplea3} as
\begin{align}
\label{eq:mes_simple2}
M_{1\ldots kA}&=\frac{dk}{k-1+d}\frac{1}{(k!)^2}\sum_{a=1}^k\sum_{\pi\in S_{k-1}}V_{(ak)}V_\pi (P^{sym}_{1\ldots k-1} \otimes P^+_{kA})\sum_{a'=1}^k\sum_{\pi'\in S_{k-1}}V_{\pi'}V_{(a'k)}\\
&=\frac{dk}{k-1+d}\frac{1}{(k!)^2}\sum_{a,a'=1}^kV_{(ak)}\left(V_\pi P^{sym}_{1\ldots k-1}V_{\pi'}\otimes  P^+_{kA}\right)V_{(a'k)}\label{eq:mes_simple33}\\
&=\frac{dk}{k-1+d}\frac{(k-1)!}{(k!)^2}\sum_{a,a'=1}^kV_{(ak)}\left(P^{sym}_{1\ldots k-1}\otimes  P^+_{kA}\right)V_{(a'k)}\\
&=\frac{d}{k(k-1+d)}\sum_{a,a'=1}^kV_{(ak)}\left(P^{sym}_{1\ldots k-1}\otimes  P^+_{kA}\right)V_{(a'k)}.\label{eq:mes_simple3}
\end{align}
In line~\eqref{eq:mes_simple33} we used property that $V_{\pi}P^{sym}_{1\ldots k-1}=P^{sym}_{1\ldots k}V_{\pi}=P^{sym}_{1\ldots k-1}$  for every $\pi \in S_{k-1}$. Now, let us notice that Young projector $P^{sym}_{1\ldots k-1}$ can be written in terms of vectors $\{\ket{s_i}_{1\ldots k-1}\}_{i=1}^{k-2+d \choose k-1}$ from Definition~\ref{def:r_i} as $P^{sym}_{1\ldots k-1}=\sum_i |s_i\>\<s_i|_{1\ldots k-1}$, thus together with $P^+_{kA}=|\phi_d^+\>\<\phi_d^+|_{kA}$, we rewrite Eq.~\eqref{eq:mes_simple3} as 
\begin{align}
&M_{1\ldots kA}=\sum_{i=1}^{k-2+d \choose k-1}\left(\sqrt{\frac{d}{k(k-1+d)}}\sum_{a=1}^kV_{(ak)}|s_i\>_{1\ldots k-1}\otimes |\phi_d^+\>_{kA}\right)\left(\sqrt{\frac{d}{k(k-1+d)}}\sum_{a'=1}^k\<s_i|_{1\ldots k-1}\otimes \<\phi^+_d|_{kA} V_{(a'k)}\right).
\end{align}
Vectors in the brackets are precisely the vectors from Definition~\ref{def:r_i}. Thus, we proved that $M_{1\ldots kA}=\sum_i|r_i\>\<r_i|$ and obtain the expression given by Eq.~\eqref{eq:achev2}. 
\end{proof}

\subsection{Alternative proof to Lemma~\ref{lemma:attainability}} \label{sec:alternative}
We now present an alternative proof to Lemma~\ref{lemma:attainability} which will make use of Lemma~\ref{lemma:eigendecomposition}.
\Attainability* 
\begin{proof}
    We start the proof by showing that $\tr_{1\ldots kA}\Big(\ketbra{\psi}{\psi}^{\otimes k}_{12\ldots k}\otimes \ketbra{\phi^+_d}{\phi^+_d}_{AB} \; M_{1\ldots kA}\otimes \id_{B}\Big)$ is proportional to $\ketbra{\psi}$. For that, we recall that $M_{1\ldots kA}:=\frac{d}{(k-1+d)}\Big(P^{sym}_{1\ldots k}\otimes \id_A\Big)\left(\sum_{a=1}^k \id_{\overline{a}}\otimes \ketbra{\phi_d^+}_{aA} \right)$. We start by analysing the quantity
\begin{align}
    &\tr_{1\ldots kA}\Big(\ketbra{\psi}{\psi}^{\otimes k}_{12\ldots k}\otimes \ketbra{\phi^+_d}{\phi^+_d}_{AB} \; \left(\Big(P^{sym}_{1\ldots k}\otimes \id_A\Big)\Big(\id_{1\ldots(k-1)}\otimes \ketbra{\phi_d^+}_{kA} \Big)\right)\otimes \id_{B}\Big), \\
    =&\tr_{1\ldots kA}\Big(\ketbra{\psi}{\psi}^{\otimes k}_{12\ldots k}\otimes \ketbra{\phi^+_d}{\phi^+_d}_{AB} \;\Big(\id_{1\ldots(k-1)}\otimes \ketbra{\phi_d^+}_{kA} \otimes \id_{B}\Big)\Big),\\
    =&\tr_{kA}\Big(\ketbra{\psi}_k\otimes \ketbra{\phi^+_d}{\phi^+_d}_{AB} \;\Big(\ketbra{\phi_d^+}_{kA} \otimes \id_{B}\Big)\Big),\\
    =&\tr_{kA}\Big(\id_k\otimes \ketbra{\phi^+_d}{\phi^+_d}_{AB} \;\Big(\ketbra{\psi}_k\otimes \id_A \ketbra{\phi_d^+}_{kA} \otimes \id_{B}\Big)\Big),\\
    =&\tr_{kA}\Big(\id_k\otimes \ketbra{\phi^+_d}{\phi^+_d}_{AB} \;\Big(\id_k\otimes \ketbra{\psi}^T_A \ketbra{\phi_d^+}_{kA} \otimes \id_{B}\Big)\Big),\\
    =&\tr_{kA}\Big(\id_k\otimes \ketbra{\phi^+_d}{\phi^+_d}_{AB} \Big(\id_k\otimes \ketbra{\psi}^T_A \otimes\id_B\Big) \;\Big(\ketbra{\phi_d^+}_{kA} \otimes \id_{B}\Big)\Big),\\
    =&\tr_{kA}\Big(\id_k\otimes \ketbra{\phi^+_d}{\phi^+_d}_{AB} \Big(\id_k\otimes \id_A \otimes\ketbra{\psi}_B\Big) \;\Big(\ketbra{\phi_d^+}_{kA} \otimes \id_{B}\Big)\Big),\\
    =&\tr_{A}\Big(\ketbra{\phi^+_d}{\phi^+_d}_{AB} \Big(\id_A \otimes\ketbra{\psi}_B\Big) \;\Big(\frac{\id_A}{d} \otimes \id_{B}\Big)\Big),\\
    =&\frac{\ketbra{\psi}_B}{d^2}. 
\end{align}
By analogous calculations, we see that $\tr_{1\ldots kA}\Big(\ketbra{\psi}{\psi}^{\otimes k}_{12\ldots k}\otimes \ketbra{\phi^+_d}{\phi^+_d}_{AB} \; M_{1\ldots kA}\otimes \id_{B}\Big)$ is proportional to $\ketbra{\psi}$.

Now, in order to evaluate the quantity $p(d,k)$, it is enough to take the trace on the system in $\H_B$ on both sides of Eq.~\eqref{eq:psLemma} to find
\begin{align}
    p(d,k) =& \tr_{1\ldots kAB}\Big(\ketbra{\psi}{\psi}^{\otimes k}_{12\ldots k}\otimes \ketbra{\phi^+_d}{\phi^+_d}_{AB} \; M_{1\ldots kA}\otimes \id_{B}\Big) \\
     =& \frac{d}{(k-1+d)} \sum_{a=1}^k \tr\Big(\ketbra{\psi}{\psi}^{\otimes k}_{12\ldots k}\otimes \ketbra{\phi^+_d}{\phi^+_d}_{AB} \; \Big(P^{sym}_{1\ldots k}\otimes \id_A\Big)\left( \id_{\overline{a}}\otimes \ketbra{\phi_d^+}_{aA} \right)\otimes \id_{B}\Big) \\
      =&\frac{d}{(k-1+d)} \sum_{a=1}^k \Big(\frac{1}{d^2}\Big) \\
      =&\frac{d}{(k-1+d)} \frac{k}{d^2} \\
      =&\frac{k}{d(k-1+d)},
\end{align}
which concludes the proof.
\end{proof}

\section{Discussions and future directions}
In this work, we have introduced the concept of multicopy state teleportation and showed how the optimal performance in this task is obtained.
We have also connected the problem of multicopy state teleportation with the problem of storing a quantum program, i.e., an arbitrary quantum channel, into a quantum state, and then universally retrieving the stored quantum program and applying it on an input state in a scenario where $k$ copies of the input state are available.

It is also useful to place our results in the broader context of programmable quantum processing protocols optimised for different resources. In particular, our setting is complementary to the representation-matching framework introduced by Yang and Hayashi~\cite{PRXQuantum.2.020327}, where the main emphasis is on communication and memory costs rather than on maximising the success provability of exact teleportation.

More precisely, the two approaches address distinct operational regimes. In our work, the primary optimisation goal is the success probability of exact, correction-free teleportation when multiple copies of the input state are available. By contrast, the representation matching approach is designed to reduce communication and memory overhead by exploiting representation theoretic encodings of the relevant quantum information. Thus, while both frameworks exploit symmetry and representation-theoretic structure, they do so for different resource objectives.

In the representation-matching framework, the main optimisation target is communication or memory cost. For storage-and-retrieval of gate arrays, Yang and Hayashi show that a probabilistic protocol can achieve the optimal logarithmic memory scaling, while exact deterministic approaches based on directly storing the relevant operation would in general require substantially larger resources.  In contrast, our protocol operates in a different regime: rather than optimising memory or communication, it uses multiple copies of the input state and achieves an exact teleportation primitive without any correction step, whose probability approaching $1/d$ for large $k$. 
Our protocol also has a very simple heralding structure: Alice only needs to communicate a one-bit success/failure flag. This should not be viewed as a direct comparison with the Yang–Hayashi framework, whose primary cost measures are quantum communication and quantum memory.
Therefore, the two approaches should be viewed as complementary rather than competing. In Table~\ref{tab:comparison_rm} we compare the most important features between the two protocols discussed here.

\begin{table}[t]
\centering
\renewcommand{\arraystretch}{1.2}
\begin{tabular}{|p{3.1cm}|p{5.0cm}|p{5.0cm}|}
\hline
\textbf{Feature} & \textbf{This work} & \textbf{Representation matching} \\
\hline
Primary optimisation goal
& Success probability of exact, correction-free teleportation
& Quantum communication / memory cost reduction \\ \hline

Main resource used
& Multiple copies of the input state
& Representation-theoretic encoding / compression \\ \hline

Operational regime
& Exact probabilistic teleportation in a no-correction setting
& Distributed or programmable quantum information processing with probabilistic resource compression \\ \hline

Receiver-side correction
& Not required
& Not the central resource issue; the protocol is based on coherent matching and heralded postselection \\ \hline

Typical trade-off
& Higher exact-transfer success probability for fixed multicopy resources
& Lower communication / memory cost at the price of reduced success probability \\ \hline

Asymptotic feature
& $p(d,k)\to 1/d$ for large $k$
& Communication / memory can achieve near-optimal, and in some settings optimal, compression, while success probability typically decreases polynomially with the number of uses \\ \hline
\end{tabular}
\caption{Comparison between our multicopy exact teleportation protocol and the
representation-matching framework of Yang and Hayashi~\cite{PRXQuantum.2.020327}.
The two approaches address different operational tasks and optimise different
resources, so they should be viewed as complementary rather than competing.}
\label{tab:comparison_rm}
\end{table}

One future direction is to consider the multicopy state teleportation in a scenario where Alice and Bob share $N$ maximally entangled qudit states, in a multicopy PBT scenario. We consider this generalisation to be of great importance, since it would allow us to increase the success probability by making use of extra entanglement, and it would allow us to generalise the results of learning a unitary operation with $N$ calls when $k$ copies of the input state are allowed.
We believe that this problem is likely to be mathematically challenging, since one should then consider not only symmetries of the $k$ identical states $\ket{\psi}$, but also the symmetries of the $N$ ports.

Second potential research direction is a natural hybrid approach suggested by comparing to the representation-matching protocol. Such hybrid protocols may interpolate between different resource regimes: on the one hand, maximising the success probability at fixed entanglement of program resources, as in our setting; on the other hand, minimising classical communication or memory overhead, as in the representation-matching setting. This points to a concrete future direction, namely the characterization of Pareto-optimal trade-offs among success probability, classical communication, memory, entanglement, number of copies, and finally computational overhead.

Lastly, as aforementioned, our protocol requires a single bit of communication, and, if the protocol is repeated several times,  less than a single bit on average are required. This is a feature of how probabilistic multicopy teleportation is designed. One interesting research direction is to consider teleportation protocols where the classical communication form Alice to Bob is restricted to a single bit, or to less than two dits. In this way, even when corrections are allowed, Alice and Bob cannot perform perfect teleportation by consuming a pair of maximally entangled qudit states. We can then see our construction as a possible candidate for optimal teleportation protocols with a single bit of communication, even in scenarios where corrections are allowed.

\section*{Acknowledgments}
We acknowledge the  Japanese-French Laboratory for Informatics (JFLI) for the support on organising the Japanese-French Quantum Information 2023 workshop. FG and MTQ  acknowledge the PEPR integrated project EPiQ ANR-22-PETQ-0007 part of Plan France 2030 and QIA, which has received funding from the European Union’s Horizon 2020 research and innovation programme under grant agreement No 820445 and from the Horizon Europe grant agreements 101080128 and 101102140. MTQ acknowledges the support of Tremplins nouveaux entrants \& nouvelles entrantes - Edition 2024. MTQ is supported by the Agence Nationale de la Recherche (ANR) through the JCJC programme under grant number ANR-25-CE47-6396-01-HOQO-KS.
MH and TM acknowledge the support of the grant Sonata 16, UMO-2020/39/D/ST2/01234 from the Polish National Science Centre. MS acknowledges support by the IRA Programme, project no. FENG.02.01-IP.05-0006/23, financed by the
FENG program 2021-2027, Priority FENG.02, Measure FENG.02.01., with the support of the FNP.   M.M. acknowledges support by MEXT Quantum Leap Flagship Program (MEXT QLEAP) JPMXS0118069605, JPMXS0120351339, JSPS KAKENHI Grant Number 21H03394 and 23K21643, IBM Quantum, JST CREST Grant Number JPMJCR25I5, and JST ASPIRE Grant Number JPMJAP25A3.
SY acknowledges support by Japan Society for the Promotion of Science (JSPS) KAKENHI Grant Number 23KJ0734, FoPM, WINGS Program, the University of Tokyo, and DAIKIN Fellowship Program, the University of Tokyo.
For the purpose of Open Access, the authors have applied a CC-BY public copyright licence to any Author Accepted Manuscript (AAM) version arising from this submission.

\appendix

\section{Relation to probabilistic simulation of quantum channel from the future to the past} \label{appendix}
In Sec.~V.B of Ref.~\cite{genkina2012optimal}, the authors discuss the optimal probabilistic simulation of quantum channels from $k$ copies of the pure state in the future to a single copy in the past.
This is formulated as a probabilistic quantum channel $\mathcal{C}$ implementable in the form of Fig.~\ref{fig:PBT_many_copies}, which satisfies
\begin{align}
    \label{eq:simulation_future_past}
    \mathcal{C}(P_{1\ldots k}^{sym} \rho P_{1\ldots k}^{sym}) = p \tr_{1\ldots k-1}(P_{1\ldots k}^{sym} \rho P_{1\ldots k}^{sym}) \quad \forall \rho\in \mathcal{L}((\mathbb{C}^d)^{\otimes k}),
\end{align}
with the success probability $p$ independent of the input state $\rho$.
The optimal success probability is shown to be $p = \frac{k}{d(k-1+d)}$ in Ref.~\cite{genkina2012optimal}, which coincides with our result in Theorem~\ref{thm:1}.
However, this does not imply that Theorem~\ref{thm:1} is a straightforward consequence of the result in Ref.~\cite{genkina2012optimal}, as discussed below.
In our formulation, we assume that the input state is given in the form of $\rho = \ketbra{\psi}^{\otimes k}$, i.e., we assume
\begin{align}
    \label{eq:our_setting}
    \mathcal{C}(\ketbra{\psi}^{\otimes k}) = p \ketbra{\psi} \quad \forall \ket{\psi}\in \mathbb{C}^{d} \; \mathrm{s.t.} \; \|\ket{\psi}\|^2 = 1,
\end{align}
while Ref.~\cite{genkina2012optimal} considers arbitrary $\rho$ supported on the symmetric subspace.
Since
\begin{align}
    \mathrm{span}\{\ketbra{\psi}^{\otimes k} \mid \ket{\psi}\in \mathbb{C}^d\} \subsetneq \mathrm{span}\{P_{1\ldots k}^{sym} \rho P_{1\ldots k}^{sym} \mid \rho\in \mathcal{L}((\mathbb{C}^d)^{\otimes k})\}
\end{align}
holds\footnote{The inclusion is strict because the dimension of the left-hand side is given by $\binom{d+2k-1}{2k}$, which is stricly smaller than that of the right-hand side given by $\binom{d+k-1}{k}^2$.}, Eq.~\eqref{eq:simulation_future_past} implies Eq.~\eqref{eq:our_setting} from the linearity of $\mathcal{C}$, but the converse is not shown from the linearity.
Therefore, the achievability of Theorem~\ref{thm:1} can be considered as a corollary of Ref.~\cite{genkina2012optimal}, but the optimality is not trivial.

On the other hand, we can show the implication of Eq.~\eqref{eq:simulation_future_past} from Eq.~\eqref{eq:our_setting} by using the twirling technique, as shown below.
\begin{lemma}
    \label{lem:cp_map_reduction}
If a CP map $\mathcal{C}$ satisfies Eq.~\eqref{eq:our_setting}, then the $\mathrm{SU}(d)$-twirled channel defined by
\begin{align}
    \mathcal{C}'(\cdot) \coloneqq \int_{\mathrm{SU}(d)} \dd U U^\dagger \mathcal{C}(U^{\otimes k}(\cdot)U^{\dagger \otimes k}) U
\end{align}
satisfies Eq.~\eqref{eq:simulation_future_past} with the same success probability $p$.
\end{lemma}
\begin{proof}
Since $\mathcal{C}$ is a CP map, it has a Kraus representation $\mathcal{C}(\cdot) = \sum_j K_j (\cdot) K_j^\dagger$ with $K_j:(\mathbb{C}^d)^{\otimes k} \to \mathbb{C}^d$.
Then, Eq.~\eqref{eq:our_setting} reads
\begin{align}
    \sum_j K_j \ketbra{\psi}^{\otimes k} K_j^\dagger = p\ketbra{\psi},
\end{align}
i.e., there exists $\alpha_j^{\psi} \in \mathbb{C}$ such that
\begin{align}
    K_j \ket{\psi}^{\otimes k} &= \alpha_j^{\psi} \ket{\psi},\\
    \sum_j \abs{\alpha_j^\psi}^2 &= p \quad \forall \ket{\psi}\in \mathbb{C}^{d} \; \mathrm{s.t.} \; \|\ket{\psi}\|^2 = 1.
\end{align}
By linearity of $K_j$, $\alpha_j^{\psi}$ is given as a $(k-1)$-th polynomial of $\ket{\psi}$, i.e., there exists a vector $\ket{\eta_j}\in (\mathbb{C}^{d})^{\otimes (k-1)}$ such that
\begin{align}
    \alpha_j^\psi = \bra{\eta_j} \ket{\psi}^{\otimes (k-1)},
\end{align}
and $\ket{\eta_j}$ is in the symmetric subspace of $(\mathbb{C}^d)^{\otimes (k-1)}$.
Therefore, $K_j$ satisfies
\begin{align}
    K_j \ket{\psi}^{\otimes k} = (\bra{\eta_j} \otimes \1_d) \ket{\psi}^{\otimes k} \quad \forall \ket{\psi}\in \mathbb{C}^d,
\end{align}
i.e.,
\begin{align}
    K_j P_{1\ldots k}^{sym} = (\bra{\eta_j} \otimes \1_d) P_{1\ldots k}^{sym}
\end{align}
holds.
Then, the action of $\mathcal{C}'$ on $P_{1\ldots k}^{sym} \rho P_{1\ldots k}^{sym}$ is given by
\begin{align}
    \mathcal{C}'(P_{1\ldots k}^{sym} \rho P_{1\ldots k}^{sym})
    &= \sum_j \int_{\mathrm{SU}(d)} \dd U U^\dagger K_j U^{\otimes k}(P_{1\ldots k}^{sym} \rho P_{1\ldots k}^{sym})U^{\dagger \otimes k} K_j^\dagger U\\
    &= \sum_j \int_{\mathrm{SU}(d)} \dd U U^\dagger K_j P_{1\ldots k}^{sym} U^{\otimes k} \rho U^{\dagger \otimes k} P_{1\ldots k}^{sym} K_j^\dagger U\\
    &= \sum_j \int_{\mathrm{SU}(d)} \dd U (\bra{\eta_j} \otimes U^\dagger) P_{1\ldots k}^{sym} U^{\otimes k} \rho U^{\dagger \otimes k} P_{1\ldots k}^{sym} (\ket{\eta_j} \otimes U)\\
    &= \sum_j \int_{\mathrm{SU}(d)} \dd U (\bra{\eta_j} U^{\otimes (k-1)} \otimes \1_d) P_{1\ldots k}^{sym}\rho P_{1\ldots k}^{sym} (U^{\dagger \otimes (k-1)} \ket{\eta_j} \otimes \1_d)\\
    &= \tr_{1\ldots k-1} (P_{1\ldots k}^{sym} \rho P_{1\ldots k}^{sym} (K\otimes \1_d)),
\end{align}
where $K$ is defined by
\begin{align}
    K\coloneqq \sum_j \int_{\mathrm{SU}(d)} \dd U (U^{\dagger \otimes (k-1)} \ketbra{\eta_j} U^{\otimes (k-1)}).
\end{align}
Due to the left- and right-invariance of the Haar measure, $K$ commutes with $U^{\otimes (k-1)}$ for every $U\in \mathrm{SU}(d)$.
Since $\ket{\eta_j}$ is in the symmetric subspace of $(\mathbb{C}^d)^{\otimes (k-1)}$, Schur's lemma implies that $K$ is proportional to the identity operator on the symmetric subspace, i.e.,
\begin{align}
    K = c P_{1\ldots k-1}^{sym}
\end{align}
holds for some $c>0$.
Therefore, we obtain
\begin{align}
    \mathcal{C}'(P_{1\ldots k}^{sym} \rho P_{1\ldots k}^{sym}) &= c \tr_{1\ldots k-1}(P_{1\ldots k}^{sym} \rho P_{1\ldots k}^{sym}).
\end{align}
On the other hand, Eq.~\eqref{eq:our_setting} implies that
\begin{align}
    \mathcal{C}'(\ketbra{\psi}^{\otimes k}) &= p \ketbra{\psi} \quad \forall \ket{\psi}\in \mathbb{C}^{d} \; \mathrm{s.t.} \; \|\ket{\psi}\|^2 = 1.
\end{align}
By comparing the two equations, we conclude that $c=p$, which completes the proof.
\end{proof}
By using Lemma~\ref{lem:cp_map_reduction} and Ref.~\cite{genkina2012optimal}, we can show the optimality of Theorem~\ref{thm:1} as follows.
\begin{proof}[Alternative proof of the optimality part of Theorem~\ref{thm:1}]
We show the optimality by contradiction.
Suppose there exists a CP map $\mathcal{C}$ implementable in the setting of Fig.~\ref{fig:PBT_many_copies}, which satisfies Eq.~\eqref{eq:our_setting} with success probability $p > \frac{k}{d(k-1+d)}$.
Then, by Lemma~\ref{lem:cp_map_reduction}, the $\mathrm{SU}(d)$-twirled map $\mathcal{C}'$ satisfies Eq.~\eqref{eq:simulation_future_past} with the same success probability $p$.
Since the CP map $\mathcal{C}'$ is also implementable in the setting of Fig.~\ref{fig:PBT_many_copies}, this implies that there exists a probabilistic simulation of quantum channels from $k$ copies of the pure state in the future to a single copy in the past with the success probability $p > \frac{k}{d(k-1+d)}$, which contradicts with the optimality result in Ref.~\cite{genkina2012optimal}.
\end{proof}


\nocite{apsrev42Control} 
\bibliographystyle{0_MTQ_apsrev4-2_corrected}
\bibliography{0_MTQ_bib.bib}
\end{document}